\title{Substructural Observed Communication Semantics}
\author{Ryan Kavanagh
  \institute{Computer Science Department\\
    Carnegie Mellon University\\
    Pittsburgh, Pennsylvania, 15213-3891, USA}
  \email{rkavanagh@cs.cmu.edu}
}
\crefname{diagram}{diagram}{diagrams}
\crefname{property}{property}{properties}
\crefname{intn}{interpretation}{interpretations}
\newcommand{\red}[1]{{\color{Maroon}{#1}}}
\newcommand{\blue}[1]{{\color{MidnightBlue}{#1}}}
\newcommand{\ie}{i.e.\@\xspace}
\newcommand{\eg}{e.g.\@\xspace}
\newcommand*{\etc}{%
  \@ifnextchar{.}%
  {etc}%
  {etc.\@\xspace}%
}
\newcommand{\ms}{\mathsf}
\newcommand{\mb}{\mathbf}
\newcommand{\mc}{\mathcal}
\newcommand{\defin}[1]{{\usefont{T1}{lmss}{b}{n}{#1}}}
\newcommand\operatorupX[1]{\,\ThisStyle{\ensurestackMath{%
      #1\stackengine{-0pt}{\,}{\SavedStyle\!^{\mathord{\uparrow}}}{O}{l}{F}{T}{S}}}\!}
\newcommand\operatorup[1]{\!\mathop{\operatorupX{#1}}\@ifnextchar\{{\,}{\@ifnextchar[{\,}{\@ifnextchar({\,}{}}}}
\newcommand{\jtype}[4]{{#1} \vdash {#2} \mathrel{::} {#3} : {#4}}
\newcommand{\jtypem}[5]{{#1}\mathrel{;} {#2} \vdash {#3} \mathrel{{:}{:}} {#4} : {#5}}
\newcommand{\jstep}[3][{}]{#2 \xrightarrow{} #3}
\newcommand{\jproc}[2]{\mathsf{proc}(#1, #2)}
\newcommand{\jmsg}[2]{\mathsf{msg}(#1, #2)}
\newcommand{\outm}[1]{\red{#1}}
\newcommand{\inm}[1]{\blue{#1}}
\newcommand{\jttp}[3]{\inm{#1} \vdash \inm{#2} : \outm{#3}}
\newcommand{\jsynt}[2]{\inm{#1} \mathrel{\varepsilon} \inm{#2}}
\newcommand{\jtoc}[4]{\inm{#1} \leadsto \outm{#2} \mathrel{\varepsilon} \outm{#4} \mathrel{/} \inm{#3}}
\newcommand{\jctype}[2]{{\ensuremath \mb{type}(#1 : #2)}}
\newcommand{\tSendC}[3]{\mathsf{send}\ #1\ #2;\ #3}
\newcommand{\tSendL}[3]{#1.#2;#3}
\newcommand{\tSendU}[2]{\ms{send}\ #1\ \mathsf{unfold}; #2}
\newcommand{\tRecvC}[3]{#1 \leftarrow \mathsf{recv}\ #2;\ #3}
\newcommand{\tRecvU}[2]{\ms{unfold} \leftarrow \ms{recv}\ #1; #2}
\newcommand{\tClose}[1]{\mathsf{close}\ #1}
\newcommand{\tWait}[2]{\mathsf{wait}\ #1;#2}
\newcommand{\tCase}[2]{\mathsf{case}\ #1\ #2}
\newcommand{\tFwd}[2]{#1 \leftarrow #2}
\newcommand{\tFwdP}[2]{#1 \rightarrow #2}
\newcommand{\tCut}[3]{#1 \leftarrow #2;\ #3}
\newcommand{\tFix}[2]{\ms{fix}\ #1.#2}
\newcommand{\mSendL}[3]{\tSendL{#1}{#2}{\tFwd{#1}{#3}}}
\newcommand{\mSendC}[3]{\tSendC{#1}{#2}{\tFwd{#1}{#3}}}
\newcommand{\mSendU}[2]{\tSendU{#1}{\tFwd{#1}{#2}}}
\newcommand{\Tu}{\mathbf{1}}
\newcommand{\Tplus}{\oplus}
\newcommand{\Tot}{\otimes}
\newcommand{\Tamp}{\&}
\newcommand{\Trec}[2]{\rho #1.#2}
\newcommand{\Tproc}[2]{\{#1 \leftarrow #2\}}
\newcommand{\N}{\mathbb{N}}
\DeclareMathOperator{\inpc}{ic}
\DeclareMathOperator{\outc}{oc}
\DeclareMathOperator{\fcn}{fc}
\DeclareMathOperator{\gfp}{gfp}
\newcommand{\obsbr}[1]{\llparenthesis {#1} \rrparenthesis}
\newtheorem{theorem}{Theorem}
\crefname{conjecture}{conjecture}{conjectures}
\newtheorem{example}{Example}
\newtheorem{proposition}{Proposition}
\newtheorem{lemma}{Lemma}
\newtheorem{corollary}{Corollary}
\newcommand{\rn}[1]{(\textsc{#1})}
\newcommand{\gact}[2]{#1 \mathbin{\cdot} #2}
\newcommand{\olap}[2]{\Omega_{#1}(#2)}
\newcommand{\card}[1]{\mb{#1}}
\newcommand{\unfold}{\ms{unfold}}
\DeclareMathOperator{\queue}{\ms{que}}%
\DeclareMathOperator{\enq}{\ms{enq}}%
\DeclareMathOperator{\supp}{supp}
\begin{document}

\maketitle

\begin{abstract}
  Session-types specify communication protocols for communicating processes, and session-typed languages are often specified using substructural operational semantics given by multiset rewriting systems.
  We give an observed communication semantics~\cite{atkey_2017:_obser_commun_seman_class_proces} for a session-typed language with recursion, where a process's observation is given by its external communications.
  To do so, we introduce \textit{fair executions} for multiset rewriting systems, and extract observed communications from fair process executions.
  This semantics induces an intuitively reasonable notion of observational equivalence that we conjecture coincides with semantic equivalences induced by denotational semantics~\cite{kavanagh_2020:_domain_seman_higher_v3}, bisimulations~\cite{gommerstadt_2018:_session_typed_concur_contr}, and barbed congruences~\cite{toninho_2015:_logic_found_session_concur_comput, kokke_2019:_better_late_than_never} for these languages.
\end{abstract}

\section{Introduction}
\label{sec:introduction}

A proofs-as-processes correspondence between linear logic and the session-typed $\pi$-calculus is the basis of many programming languages for message-passing concurrency~\cite{caires_pfenning_2010:_session_types_intuit_linear_propos,caires_2016:_linear_logic_propos,toninho_2011:_depen_session_types,wadler_2014:_propos_as_session}.
Session types specify communication protocols, and all communication with session-typed processes must respect these protocols.
If we take seriously the idea that we can only interact with processes through session-typed communication, then the only thing we can observe about them is their communications.
Indeed, timing differences in communication are not meaningful due to the non-deterministic scheduling of process reductions, and ``forwarding'' or ``linking'' of channels renders process termination meaningless, even in the presence of recursion.
It follows that processes should be observationally indistinguishable only if they always send the same output given the same input.

These ideas underlie Atkey's~\cite{atkey_2017:_obser_commun_seman_class_proces} novel \textit{observed communication semantics} (OCS) for Wadler's Classical Processes~\cite{wadler_2014:_propos_as_session}.
Atkey's OCS uses a big-step evaluation semantics to observe communications on channels deemed ``observable''.
Processes are then observationally equivalent whenever they have the same observed communications in all contexts.

Building on these ideas, \textbf{we give an OCS for session-typed languages that are specified using substructural operational semantics} (SSOS), a form of multiset rewriting.
Our work differs from Atkey's on several key points.
First, we assume that communication is asynchronous rather than synchronous.
This assumption costs us nothing, for synchronous communication can be encoded in asynchronous systems~\cite{pfenning_griffith_2015:_polar_subst_session_types}, and it simplifies the semantics by eliminating the need for ``configurations'' and ``visible'' cuts.
More importantly, \textbf{our OCS supports recursive and non-terminating processes}.
To do so, we observe communications from process traces in (a conservative extension of) the usual SSOSs, instead of defining a separate big-step semantics.

To ensure that observed communications are well-defined in the presence of non-termination, we require that process executions be \textit{fair}.
Intuitively, fairness ensures that if a process can make progress, then it eventually does so.
Fairness is also motivated by ongoing efforts to relate existing SSOSs to domain-theoretic semantics for this style of language~\cite{kavanagh_2020:_domain_seman_higher_v3}.
There, processes denote continuous functions between domains of session-typed communications, and fairness is built-in.
To this end, \textbf{we introduce \textit{fair executions} of multiset rewriting systems} (MRS) and give sufficient conditions for an MRS to have fair executions.
We also introduce a new notion of trace equivalence, \textit{union-equivalence}, that is key to defining our OCS.

We study fair executions of MRSs and their properties in \cref{sec:fair-mult-rewr}.
In \cref{sec:sess-typed-proc}, we give an SSOS for a session-typed language arising from a proofs-as-processes interpretation of intuitionistic linear logic.
It supports recursive processes and types.
Though it is limited, it represents the core of other SSOS-specified session-typed languages~\cite{balzer_pfenning_2017:_manif_sharin_with_session_types,gommerstadt_2018:_session_typed_concur_contr,kavanagh_2020:_domain_seman_higher_v3,pfenning_griffith_2015:_polar_subst_session_types,toninho_2013:_higher_order_proces_funct_session}, and the techniques presented in this paper scale to their richer settings.
In \cref{sec:observ-comm}, we give our observed communication semantics, where we use a coinductively defined judgment to extract observations from fair executions.

\section{Fair Executions of Multiset Rewriting Systems}
\label{sec:fair-mult-rewr}

In this section, we introduce \textit{fairness} and \textit{fair executions} for multiset rewriting systems.
We begin by revisiting (first-order) multiset rewriting systems, as presented by Cervesato et~al.~\cite{cervesato_2005:_compar_between_stran}.
We present a notion of fairness for sequences of rewriting steps, and constructively show that under reasonable hypotheses, all fair sequences from the same multiset are permutations of each other.
We introduce a new notion of trace equivalence, ``union-equivalence'', and give sufficient conditions for traces to be union-equivalent.
Fairness and union-equivalence will be key ingredients for defining the observed communication semantics of \cref{sec:observ-comm}.

A \defin{multiset} $M$ is a pair $(S, m)$ where $S$ is a set (the \textit{underlying set}) and $m : S \to \N$ is a function.
It is finite if $\sum_{s \in S} m(s)$ is finite.
We say $s$ is an \defin{element} of $M$, $s \in M$, if $m(s) > 0$.
When considering several multisets, we assume without loss of generality that they have equal underlying sets.
The \defin{sum} $M_1,M_2$ of multisets $M_1 = (S, m_1)$ and $M_2 = (S, m_2)$ is the multiset $(S, \lambda s \in S.m_1(s) + m_2(s))$.
Their \defin{intersection} $M_1 \cap M_2$ is the multiset $(S, \lambda s \in S . \min(m_1(s), m_2(s)))$.
Their \defin{difference} $M_1 \setminus M_2$ is the multiset $(S, \lambda s \in S . \max(0, m_1(s) - m_2(s)))$.
We say that $M_1$ is \defin{included} in $M_2$, written $M_1 \subseteq M_2$, if $m_1(s) \leq m_2(s)$ for all $s \in S$.

Consider finite multisets $M$ of first-order atomic formulas over some signature whose constants are drawn from some countably infinite set.
We call closed formulas \defin{judgments}.
Judgments represent facts, some of which we may deem to be persistent.
To this end, we partition formulas as \defin{persistent} (indicated by bold face, $\mb{p}$) and \defin{ephemeral} (indicated by sans serif face, $\ms{p}$).
We write $M(\vec x)$ to mean that the formulas in $M$ draw their variables from $\vec x$.
A \defin{multiset rewrite rule} $r$ is an ordered pair of multisets $F(\vec x)$ and $G(\vec x, \vec n)$, where the multiset $\pi(\vec x)$ of persistent formulas in $F(\vec x)$ is included in $G(\vec x, \vec n)$.
We interpret the variables $\vec x$ as being universally quantified and the variables $\vec n$ as being existentially quantified.
This relation is made explicit using the syntax
\[
  r : \forall \vec x . F(\vec x) \to \exists \vec n . G(\vec x, \vec n).
\]
In practice, we often elide $\forall \vec x$ and do not repeat the persistent formulas $\pi(\vec x) \subseteq F(\vec x)$ on the right side of the arrow.
A \defin{multiset rewriting system} (MRS) is a set $\mc{R}$ of multiset rewrite rules.

Multiset rewrite rules describe localized changes to multisets of judgments.
Given a rule $r : \forall \vec x . F(\vec x) \to \exists \vec n . G(\vec x, \vec n)$ in $\mc{R}$ and some choice of constants $\vec c$ for $\vec x$, we say that the \defin{instantiation} $r(\vec c) : F(\vec c) \to \exists \vec n.G(\vec c, \vec n)$ is \defin{applicable} to a multiset $M$ of judgments if there exists a multiset $M'$ such that $M = F(\vec c),M'$.
The rule $r$ is applicable to $M$ if $r(\vec c)$ is applicable to $M$ for some $\vec c$.
In these cases, the \defin{result} of applying $r(\vec c)$ to $M$ is the multiset $G(\vec c, \vec d),M'$, where $\vec d$ is a choice of fresh constants.
In particular, we assume that the constants $\vec d$ do not appear in $M$ or in $\mc{R}$.
We call $\theta = [\vec c / \vec x]$ the \defin{matching substitution} and $\xi = [\vec d / \vec n]$ the \defin{fresh-constant substitution}.
The \defin{instantiating substitution} for $r$ relative to $M$ is the composite substitution $\delta = (\theta, \xi)$.
We capture this relation using the syntax
\[
  F(\vec c),M' \xrightarrow{(r;\delta)} G(\vec c, \vec n),M'.
\]
For conciseness, we often abuse notation and write $r(\theta)$, $F(\theta)$, and $G(\theta,\xi)$ for $r(\vec c)$, $F(\vec c)$, and $G(\vec c, \vec d)$.
We call $F(\vec c)$ the \defin{active} multiset and $M'$ the \defin{stationary} multiset.

Given an MRS $\mc{R}$ and a multiset $M_0$, a \defin{trace} from $M_0$ is a countable sequence of steps
\begin{equation}
  \label{eq:obssem:7}
  M_0 \xrightarrow{(r_1;\delta_1)} M_1 \xrightarrow{(r_2;\delta_2)} M_2 \xrightarrow{(r_3;\delta_3)} \cdots
\end{equation}
such that, where $\delta_i = (\theta_i, \xi_i)$,
\begin{enumerate}
\item for all $i$, $\xi_i$ is one-to-one;
\item for all $i < j$, the constants in $M_i$ and $\xi_j$ are disjoint.
\end{enumerate}
The notation $(M_0, (r_i;\delta_i)_{i \in I})$ abbreviates the trace \eqref{eq:obssem:7}, where $I$ always ranges over $\N^+$ or $\card{n} = \{1, \dotsc, n \}$ for some $n \in \N$.
An \defin{execution} is a maximally long trace.

\begin{example}
  \label[example]{ex:fair-mult-rewr:1}%
  We model queues using an MRS.
  Let the judgment $\queue(q, \$)$ mean that $q$ is the empty queue, and let $\queue(q, v \to q')$ mean that the queue $q$ has value $v$ at its head and that its tail is the queue $q'$.
  Then the multiset $Q = \queue(q, 0 \to q'), \queue(q', \$)$ describes a one-element queue containing $0$.
  The following two rules capture enqueuing values on empty and non-empty queues, respectively, where the formula $\enq(q, v)$ is used to enqueue $v$ onto the queue $q$:
  \begin{align*}
    e_1 &: \forall x, y . \enq(x, y), \queue(x, \$) \to \exists z . \queue(x, y \to z), \queue(z, \$),\\
    e_2 &: \forall x, y, z, w . \enq(x, y), \queue(x, z \to w) \to \queue(x, z \to w), \enq(w, y).
  \end{align*}
  The following sequence is an execution from $Q,\enq(q,1)$, and it captures enqueuing 1 on the queue $q$:
  \[
    Q,\enq(q,1) \xrightarrow{(e_2;([q,1,0,q'/x,y,z,w], \emptyset))} Q,\enq(q',1) \xrightarrow{(e_1;([q',1/x,y], [a/z]))} \queue(q, 0 \to q'), \queue(q', 1 \to a), \queue(a, \$).
  \]
\end{example}

The constants in fresh-constant substitutions are not semantically meaningful, so we identify traces up to refreshing substitutions.
A \defin{refreshing substitution} for a trace $T = (M_0,(r_i;(\theta_i,\xi_i))_i)$ is a collection of fresh-constant substitutions $\eta = (\eta_i)_i$ such that $[\eta]T = (M_0, (r_i; (\theta_i, \eta_i))_i)$ is also a trace.
Explicitly, we identify traces $T$ and $T'$ if there exists a refreshing substitution $\eta$ such that $T' = [\eta]T$.

Given rules $r_i : \forall \vec x_i . F_i(\vec x_i) \to \exists \vec n_i . G_i(\vec x_i, \vec n_i)$ and matching substitutions $\theta_i$ for $i = 1, 2$, we say that the instantiations $r_1(\theta_1)$ and $r_2(\theta_2)$ are \defin{equivalent}, $r_1(\theta_1) \equiv r_2(\theta_2)$, if both $F_1(\theta_1) = F_2(\theta_2)$ and (up to renaming of bound variables) $\exists \vec n_1 . G_1(\theta_1, \vec n_1) = \exists \vec n_2 . G_2(\theta_2, \vec n_2)$; otherwise they are \defin{distinct}.
Application does not distinguish between equivalent instantiations: if $r_1(\theta_1) \equiv r_2(\theta_2)$ are applicable to $M_0$, then applying each to $M_0$ gives the same result up to refreshing substitution.

Given an MRS $\mc{R}$, we say that an execution $(M_0,(r_i;\delta_i)_{i \in I})$ is \defin{fair} if for all $i \in I$, $r \in \mc{R}$, and $\theta$, whenever $r(\theta)$ is applicable to $M_i$, there exists a $j > i$ such that $r_j(\theta_j) \equiv r(\theta)$.
Given a fair trace $T$, we write $\phi_T(i,r,\theta)$ for the least such $j$.
In the case of MRSs specifying SSOSs of session-typed languages, this notion of fairness implies \textit{strong process fairness}~\cite{francez_1986:_fairn, park_1982:_predic_trans_weak_fair_iterat, costa_stirling_1987:_weak_stron_fairn_ccs}, which guarantees that if a process can take a step infinitely often, then it does so infinitely often.
In particular, it implies that if a process can take a step, then it eventually does so.

\begin{example}
  \label[example]{ex:fair-mult-rewr:3}
  The execution of \cref{ex:fair-mult-rewr:1} is fair.
\end{example}

\begin{proposition}[Fair Tail Property]
  \label[proposition]{prop:main:4}
  If $(M_0, (t_i;\delta_i)_{i \in I})$ is fair, then so is $(M_n, (t_i;\delta_i)_{\substack{n<i, i \in I}})$ for all $n \in I$.
\end{proposition}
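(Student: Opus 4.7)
The plan is to unfold each of the three conditions required of $T' = (M_n, (t_i;\delta_i)_{n<i,\, i \in I})$ (being a trace, being maximal, being fair) and verify them directly from the corresponding properties of the original trace $T = (M_0, (t_i;\delta_i)_{i \in I})$. The observation that makes the proof essentially routine is that each of these conditions is quantified over indices, and restricting attention to indices greater than $n$ only shrinks the quantifier's scope.

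First I would check that $T'$ is a trace. The two conditions on the fresh-constant substitutions of a trace are: (1) each $\xi_i$ is one-to-one, and (2) for $i < j$ the constants of $M_i$ are disjoint from those of $\xi_j$. Both conditions hold for the tail because they hold for all indices of $T$, and so in particular for all indices with $i, j > n$. Next, I would verify that $T'$ is an execution. If $I = \N^+$, then so is the index set of $T'$ after re-indexing, so $T'$ is infinite. If $I = \card{N}$ is finite, then the final multiset of $T'$ is $M_N$, which coincides with the final multiset of $T$; since $T$ is maximal, no rule instantiation is applicable to $M_N$, so $T'$ is also maximal.

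The main content is the fairness condition. Fix $i \in I$ with $i > n$, a rule $r \in \mc{R}$, and a matching substitution $\theta$ such that $r(\theta)$ is applicable to $M_i$. Because $i \in I$, the fairness of $T$ at index $i$ yields some $j > i$ with $r_j(\theta_j) \equiv r(\theta)$. Since $j > i > n$, the index $j$ lies in the index set $\{k \in I : k > n\}$ of the tail, so $j$ serves as a witness to fairness at $i$ for $T'$.

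The only ``obstacle'' is bookkeeping: one must be careful that the fairness witness $j$ produced by $T$ actually lives in the tail's index set, but since fairness always yields a witness strictly greater than the queried index $i$, and $i > n$ by hypothesis, this is automatic. No re-indexing or substitution manipulation is required.
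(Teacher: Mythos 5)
Your proof is correct: the paper states this proposition without proof, and your direct unfolding of the definitions (tail is a trace, tail is maximal, fairness witnesses $j > i > n$ automatically land in the tail's index set) is exactly the routine argument the paper intends. The only point worth making explicit is that fairness of the tail at its initial multiset $M_n$ is also needed, and it follows because $n \in I$, so fairness of the original execution at index $n$ already supplies a witness $j > n$.
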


We consider various criteria that imply fairness.
The first will be interference-freedom, which roughly means that at any given point, the order in which we apply applicable rules does not matter.
It will hold whenever the rules do not ``overlap''.
In general, given an MRS $\mc{R}$ and a property $P$, we say $P$ holds \defin{from} $M_0$ if for all traces $(M_0, (r_i;\delta_i)_{i \in I})$, $P$ holds for $M_0$ and for $M_i$ for all $i \in I$.

Write $S_I$ for the group of bijections on $I$; its elements are called permutations.
A permutation $\sigma \in S_I$ acts on a trace $T = (M_0, (t_i;\delta_i)_{i \in I})$ to produce a sequence $\gact{\sigma}{T} = (M_0, (t_{\sigma(i)};\delta_{\sigma(i)})_{i \in I})$.
This sequence $\gact{\sigma}{T}$ is a \defin{permutation of} $T$ whenever it is also a trace.
We adopt group-theoretic notation for cyclic permutations and write $(x, \sigma(x), \sigma(\sigma(x)), \dotsc)$ for a cyclic permutation $\sigma : I \to I$; implicit is that all elements not in the orbit of $x$ are fixed by $\sigma$.
Cycles of length two are called transpositions.

Consider an MRS $\mc{R}$ and let $r_1(\theta_1), \dotsc, r_n(\theta_n)$ enumerate all distinct instantiations of rules in $\mc{R}$ applicable to $M_0$.
We say that $\mc{R}$ \defin{commutes} on $M_0$ or is \defin{interference-free} on $M_0$ if for all corresponding pairwise-disjoint fresh-constant substitutions $\xi_i$, the following diagram commutes for all permutations $\sigma \in S_{\card{n}}$, and both paths around it are traces:
\[
  \begin{tikzcd}[column sep=6em, row sep=1em]
    &
    M_1
    \ar[r, "{(r_2;(\theta_2,\xi_2))}"]
    &[1ex]
    \cdots
    \ar[r, "{(r_{n - 1};(\theta_{n - 1},\xi_{n - 1}))}"]
    &[2.5em]
    M_{n - 1}
    \ar[dr, "{(r_n;(\theta_n,\xi_n))}" pos=0.3]
    &
    \\
    M_0
    \ar[ur, "{(r_1;(\theta_1,\xi_1))}"  pos=0.7]
    \ar[dr, swap, "{(r_{\sigma(1)};(\theta_{\sigma(1),\xi_\sigma(1)}))}" pos=0.7]
    &
    &
    &
    &
    M_n
    \\
    &
    M_1'
    \ar[r, "{(r_{\sigma(2)};(\theta_{\sigma(2)},\xi_{\sigma(2)}))}"]
    &
    \cdots
    \ar[r, "{(r_{\sigma(n - 1)};(\theta_{\sigma(n - 1)},\xi_{\sigma(n - 1)}))}"]
    &
    M_{n - 1}'
    \ar[ur, swap, "{(r_{\sigma(n)};(\theta_{\sigma(n)},\xi_{\sigma(n)}))}" pos=0.3]
    &
  \end{tikzcd}
\]
We note that interference-freedom is only defined if the enumeration of distinct applicable instantiations is finite.
The following proposition is an immediate consequence of the definition of commuting rules:

\begin{proposition}
  \label[proposition]{prop:fair-mult-rewr:3}
  Let $\mc{R}$ commute on $M_0$, and let $r_i(\theta_i)$ with $1 \leq i \leq n$ be the distinct instantiations applicable on $M_0$.
  If $M_0 \xrightarrow{(r_1;(\theta_1,\xi_1))} M_1$, then $r_2(\theta_2), \dotsc, r_n(\theta_n)$ are applicable to and commute on $M_1$.
\end{proposition}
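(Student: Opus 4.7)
The plan is to prove each of the two conclusions by instantiating the commuting-diagram hypothesis on $M_0$ at a carefully chosen permutation that fixes the index $1$, and then stripping the common first step $M_0 \to M_1$ from both the identity and the $\sigma$-paths.

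First, I would establish applicability. Fix $k \in \{2, \ldots, n\}$ and let $\sigma \in S_{\card{n}}$ be the transposition of $2$ and $k$, so $\sigma(1) = 1$ and $\sigma(2) = k$. Choose pairwise-disjoint fresh-constant substitutions $\xi_1, \ldots, \xi_n$. By interference-freedom of $\mc{R}$ on $M_0$, both the top (identity) path and the $\sigma$-path are valid traces ending at a common $M_n$. The first step of the $\sigma$-path, $M_0 \xrightarrow{(r_1;(\theta_1, \xi_1))} M_1'$, uses the same rule and instantiating substitution as the first step of the top path, so $M_1' = M_1$ up to refreshing substitution (which may be absorbed). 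The second step of the $\sigma$-path, $M_1 \xrightarrow{(r_k;(\theta_k, \xi_k))} M_2'$, then witnesses that $r_k(\theta_k)$ is applicable to $M_1$.

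Second, I would verify the commutation conclusion. Given any permutation $\tau$ of $\{2, \ldots, n\}$, extend it to a permutation $\sigma \in S_{\card{n}}$ by setting $\sigma(1) = 1$ and $\sigma(i) = \tau(i)$ for $i \geq 2$. By the commuting hypothesis on $M_0$, the identity path and the $\sigma$-path are both traces ending at a common $M_n$; they share the first step $M_0 \xrightarrow{(r_1;(\theta_1,\xi_1))} M_1$. Consequently their suffixes from $M_1$ are also traces ending at the same $M_n$, with the top suffix applying $r_2, \ldots, r_n$ in order and the bottom suffix applying $r_{\tau(2)}, \ldots, r_{\tau(n)}$. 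Since $\tau$ was arbitrary, the diagram required by interference-freedom of $\{r_2(\theta_2), \ldots, r_n(\theta_n)\}$ on $M_1$ commutes.

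The main obstacle is purely bookkeeping around the fresh-constant substitutions: I have to argue that the top and bottom paths share not merely the abstract first rule $r_1$ but the exact same resulting multiset $M_1$. This is handled by the pairwise-disjointness assumption in the definition of commuting rules together with the identification of traces up to refreshing substitution, after which both halves of the proof reduce to restricting the given commuting diagram to permutations that fix $1$.
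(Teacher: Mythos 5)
Your proposal is correct and matches the paper's (implicit) argument: the paper offers no written proof, declaring the proposition an immediate consequence of the definition of commuting rules, and your write-up simply unfolds that definition by restricting the commuting diagram to permutations fixing the index $1$ and stripping the shared first step. The bookkeeping you flag (disjointness of the fresh-constant substitutions and identification of traces up to refreshing) is handled exactly as you describe, so there is no gap.
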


Interference-freedom implies the existence of fair executions:

\begin{proposition}[Fair Scheduler]
  \label{prop:main:9}
  Assume the axiom of countable choice.
  If $\mc{R}$ is interference-free from $M_0$, then there is a fair execution from $M_0$.
\end{proposition}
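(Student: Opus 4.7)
The plan is to construct a fair execution by explicit round-robin scheduling. I build in parallel the execution $(M_i)_i$ and an auxiliary sequence $(L_i)_i$ of finite ordered lists of distinct applicable rule instantiations, maintaining the invariant that $L_i$ enumerates, up to $\equiv$, every distinct instantiation applicable to $M_i$. Since $\mc{R}$ is interference-free from $M_0$, every reachable multiset admits only finitely many distinct applicable instantiations, so each $L_i$ is finite. The scheduler always applies the head of $L_i$ and appends newly enabled instantiations at the tail of $L_{i+1}$.

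In more detail, set $L_0$ to any enumeration of the distinct instantiations applicable to $M_0$. Given $M_i$ and $L_i$, if $L_i$ is empty, stop: by the invariant no rule applies to $M_i$, and the finite trace produced so far is maximal, hence an execution. Otherwise, take the head of $L_i$, choose a one-to-one fresh-constant substitution $\xi_{i+1}$ whose range is disjoint from the (finitely many) constants in $M_0, \dotsc, M_i$ and in $\mc{R}$, and apply the resulting instantiation to obtain $M_{i+1}$; conditions (1) and (2) of the trace definition hold by this choice of $\xi_{i+1}$. Form $L_{i+1}$ by concatenating the tail of $L_i$ with an enumeration of those distinct applicable instantiations on $M_{i+1}$ that are not already $\equiv$-equivalent to an entry of the tail. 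By \cref{prop:fair-mult-rewr:3}, every entry of the tail remains applicable at $M_{i+1}$, so the invariant is preserved. Countable choice is invoked here to select, uniformly across $i \in \N$, both the enumerations of newly enabled instantiations and the fresh-constant substitutions.

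Fairness then follows by a promotion argument. Suppose $r(\theta)$ is applicable on $M_i$. By the invariant there exists $k \geq 1$ such that $r(\theta) \equiv r^i_k$, where $r^i_k$ denotes the $k$th entry of $L_i$. Because each step drops the head of the list and only appends to its tail, $r^i_k$ is at position $k-1$ in $L_{i+1}$, at position $k-2$ in $L_{i+2}$, and so on; after exactly $k-1$ further steps, an instantiation equivalent to $r(\theta)$ is at the head and is applied. In particular $|L_{i+\ell}| \geq k - \ell$ for all $0 \leq \ell \leq k-1$, so the execution does not terminate before $r(\theta)$ is scheduled, and $\phi_T(i,r,\theta) \leq i + k \leq i + |L_i|$.

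The main obstacle is the bookkeeping required to preserve the list invariant: we must ensure that when a rule instantiation becomes applicable at $M_{i+1}$, it is appended to $L_{i+1}$ exactly when it is not already $\equiv$-equivalent to a surviving entry of $\mathrm{tail}(L_i)$, and we must verify that this invariant carries through each step using \cref{prop:fair-mult-rewr:3}. Once the invariant is established, the promotion argument and maximality of the constructed trace are routine.
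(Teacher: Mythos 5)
Your proposal is correct and matches the paper's proof essentially step for step: both construct the execution with a FIFO queue of applicable instantiations, use interference-freedom (via \cref{prop:fair-mult-rewr:3}) to guarantee that queued instantiations remain applicable, invoke countable choice for the fresh-constant substitutions, and conclude fairness because every enqueued instantiation is dequeued and applied after finitely many steps. Your version merely makes explicit the list invariant and the promotion bound $\phi_T(i,r,\theta) \leq i + |L_i|$ that the paper's sketch leaves implicit.
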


\begin{proof}[Proof (Sketch)]
  Let $Q$ be a queue of rule instantiations applicable to $M_0$.
  Given $M_n$, dequeue a rule $r_{n + 1}(\theta_{n + 1})$ from $Q$ and use the axiom of countable choice to choose a suitably disjoint fresh-constant substitution $\xi_{n + 1}$.
  By interference-freedom, it is applicable to $M_n$, and let $M_{n + 1}$ be the result of doing so.
  Enqueue all newly-applicable rule instantiations.
  If $Q$ is ever empty, then the trace is finite but maximally long.
  In all cases, the trace gives a fair execution: every distinct applicable rule instantiation is enqueued and then applied after some finite number of steps.
\end{proof}

Though interference-freedom simplifies fair scheduling, it is primarily of interest for reasoning about executions.
For example, it is useful for showing confluence properties.
It also lets us safely permute certain steps in a trace without affecting observations for session-typed processes (see~\cref{sec:observ-comm}).
This can simplify process equivalence proofs, because it lets us assume that related steps in an execution happen one after another.

Interference-freedom is a strong property, but it arises frequently in nature.
This is because many systems can be captured using rules whose active multisets do not overlap, and rules whose active multisets are non-overlapping commute.
In fact, even if their active multisets overlap, the rules do not disable each other so long as they preserve these overlaps.

To make this intuition explicit, consider multisets $M_i \subseteq M$ for $1 \leq i \leq n$.
Their \defin{overlap} in $M$ is $\olap{M}{M_1, \dotsc, M_n} = M_1,\dotsc,M_n \setminus M$.
Consider an MRS $\mc{R}$ and let $r_i(\theta_i) : F_i(\theta_i) \to \exists \vec n_i . G_i(\theta_i, \vec n_i)$, $1 \leq i \leq n$, enumerate all distinct instantiations of rules in $\mc{R}$ applicable to $M$.
We say that $\mc{R}$ is \defin{non-overlapping} on $M$ if for all $1 \leq i \leq n$ and  fresh-constant substitutions $\xi_i$, $F_i(\theta_i) \cap \olap{M}{F_1(\theta_1), \dotsc, F_n(\theta_n)} \subseteq G_i(\theta_i, \xi_i)$.

\begin{example}
  \label[example]{ex:fair-mult-rewr:2}
  The MRS given by \cref{ex:fair-mult-rewr:1} is non-overlapping from any multiset of the form $Q,E$ where $Q$ is a queue rooted at $q$, and $E$ contains at most one judgment of the form $\enq(q,v)$.
\end{example}

\Cref{prop:fair-mult-rewr:1} characterizes the application of non-overlapping rules, while \cref{prop:main:13} characterizes the relationship between commuting and non-overlapping rules.

\begin{proposition}
  \label[proposition]{prop:fair-mult-rewr:1}
  Let $\mc{R}$ be non-overlapping on $M_0$ and let $r_i(\theta_i) : F_i(\theta_i) \to \exists \vec n_i.G(\theta_i, \vec n_i)$ with $1 \leq i \leq n$ be the distinct instantiations applicable to $M_0$.
  If $M_0 \xrightarrow{(r_1;(\theta_1,\xi_1))} M_1$ and $r_1, \dotsc, r_n$ are non-overlapping on $M_0$, then $r_2(\theta_2), \dotsc, r_n(\theta_n)$ are applicable to and non-overlapping on $M_1$.

  In particular, set $O = \olap{M_0}{F_1, \dotsc, F_n} \cap F_1$.
  There exist $F_1'$ and $G_1'$ be such that $F_1 = O,F_1'$ and $G_1 = O,G_1'$, and there exists an $M$ such that $M_0 = O,F_1',M$ and $M_1 = O,G_1',M$.
  The instantiations $r_2(\theta_2), \dotsc, r_n(\theta_n)$ are all applicable to $O,M \subseteq M_1$.
\end{proposition}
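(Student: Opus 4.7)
The plan is to lift the non-overlap hypothesis through the decomposition announced in the ``In particular'' clause, after which the remaining claims reduce to pointwise multiset arithmetic.

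First I would establish the decomposition. Set $O = \olap{M_0}{F_1, \dotsc, F_n} \cap F_1$. By construction $O \subseteq F_1$, so $F_1 = O, F_1'$ for some $F_1'$. The non-overlap hypothesis at index $1$ gives $O \subseteq G_1$, hence $G_1 = O, G_1'$ for some $G_1'$. Applicability of $r_1$ to $M_0$ gives $F_1 \subseteq M_0$, so with $M = M_0 \setminus F_1$ we obtain $M_0 = O, F_1', M$ and therefore $M_1 = G_1, M = O, G_1', M$, as announced.

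Next I would show $F_j \subseteq O, M$ for each $j \neq 1$, immediately yielding applicability of $r_j(\theta_j)$ to $O, M \subseteq M_1$. This reduces to a pointwise multiset check: fix an atom $p$ and write $k_i$ for the multiplicity of $p$ in $F_i$ and $k_0$ for its multiplicity in $M_0$; the multiplicity of $p$ in $O, M$ then equals $\min(k_1, \max(0, \sum_i k_i - k_0)) + (k_0 - k_1)$. A case split on the sign of $\sum_i k_i - k_0$ and, when positive, on whether it exceeds $k_1$, reduces the needed inequality to either $k_1 + k_j \leq \sum_i k_i$ (trivial from nonnegativity of the other summands) or $k_j \leq k_0$ (individual applicability of $r_j$ to $M_0$).

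For non-overlap on $M_1$, I would prove the pointwise inclusion $\olap{M_1}{F_2, \dotsc, F_n} \subseteq \olap{M_0}{F_1, \dotsc, F_n}$. Substituting $M_1(p) = M_0(p) - F_1(p) + G_1(p)$ inside $\max(0, \sum_{i \neq 1} k_i - M_1(p))$ cancels $F_1(p)$ against the missing summand $k_1$, leaving $\max(0, \sum_i k_i - M_0(p) - G_1(p))$, which is bounded above by $\max(0, \sum_i k_i - M_0(p))$ because $G_1(p) \geq 0$. Intersecting with $F_j$ and invoking the $M_0$-non-overlap hypothesis at index $j$ then gives $F_j \cap \olap{M_1}{F_2, \dotsc, F_n} \subseteq G_j$.

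I do not anticipate a conceptual obstacle, as the entire argument is multiset bookkeeping. The main care needed is in the case analysis for applicability and in keeping the substitution $M_1(p) = M_0(p) - F_1(p) + G_1(p)$ straight during the overlap comparison; without the structural fact $O \subseteq G_1$ supplied by the hypothesis, the bookkeeping would not close.
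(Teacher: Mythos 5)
Your proof is correct and follows essentially the route the proposition itself lays out: you establish the decomposition $F_1 = O,F_1'$, $G_1 = O,G_1'$, $M_0 = O,F_1',M$, $M_1 = O,G_1',M$ from $O \subseteq G_1$ (non-overlap at index $1$) and $F_1 \subseteq M_0$ (applicability), and your pointwise case analyses for $F_j \subseteq O,M$ and for $\olap{M_1}{F_2,\dotsc,F_n} \subseteq \olap{M_0}{F_1,\dotsc,F_n}$ both check out. No gaps.
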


\begin{proposition}
  \label[proposition]{prop:main:13}
  An MRS commutes on $M_0$ if it is non-overlapping on $M_0$; the converse is false.
\end{proposition}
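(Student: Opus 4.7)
The plan is to prove the forward direction by iterating \cref{prop:fair-mult-rewr:1}, and then exhibit a concrete counterexample to the converse.

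For the forward direction, let $r_1(\theta_1), \dotsc, r_n(\theta_n)$ be the distinct applicable instantiations on $M_0$ (non-overlapping by hypothesis) and fix pairwise-disjoint fresh-constant substitutions $\xi_1, \dotsc, \xi_n$. For any permutation $\sigma \in S_{\card n}$, I must show that the sequence $r_{\sigma(1)}, \dotsc, r_{\sigma(n)}$ forms a valid trace and that its endpoint is independent of $\sigma$. Trace-hood follows by iterating \cref{prop:fair-mult-rewr:1}: firing $r_{\sigma(1)}$ at $M_0$ yields $M_1$ on which the remaining $n - 1$ instantiations are applicable and non-overlapping; firing $r_{\sigma(2)}$ at $M_1$ yields $M_2$ on which the remaining $n - 2$ are applicable and non-overlapping; and so on. Order-independence of the endpoint follows by direct multiset computation: since every intermediate state is non-negative, the endpoint agrees with the formal expression $M_0 + \sum_i G_i(\theta_i, \xi_i) - \sum_i F_i(\theta_i)$, which is symmetric in $i$. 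Pairwise-disjointness of the $\xi_i$ ensures the trace conditions on fresh constants hold regardless of firing order. Together these show the commuting diagram closes for every $\sigma$.

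For the counterexample to the converse, consider the MRS with three rules
\[
  r_i : \ms{a}, \ms{a} \to \ms{a}, \ms{b}_i \quad (i = 1, 2, 3)
\]
for pairwise distinct $\ms{b}_i$, and take $M_0 = \ms{a}, \ms{a}, \ms{a}, \ms{a}$. The three instantiations are distinct (their right-hand sides differ) and all applicable. The summed active multisets contain six copies of $\ms{a}$, so $\olap{M_0}{F_1, F_2, F_3} = \ms{a}, \ms{a}$, whence $F_i \cap \olap{M_0}{F_1, F_2, F_3} = \ms{a}, \ms{a} \not\subseteq \ms{a}, \ms{b}_i = G_i$: the MRS is not non-overlapping on $M_0$. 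Nevertheless, each rule has net effect $-\ms{a} + \ms{b}_i$, and after $k \leq 2$ firings the intermediate state contains $4 - k \geq 2$ copies of $\ms{a}$, enough for any remaining rule to fire. All orderings therefore yield valid traces ending at the common state $\ms{a}, \ms{b}_1, \ms{b}_2, \ms{b}_3$, and the MRS commutes on $M_0$.

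The main obstacle is confirming that \cref{prop:fair-mult-rewr:1} truly iterates along the permuted trace---specifically, that both applicability and non-overlap are preserved for the shrinking set of remaining instantiations at every stage, even when new instantiations may become applicable at intermediate multisets (as indeed happens in the counterexample, where each fired $r_i$ remains applicable afterwards). Once trace-hood is established, order-independence of the endpoint and verification of the counterexample reduce to straightforward multiset arithmetic.
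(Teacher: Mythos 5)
Your proof is correct and follows essentially the same route as the paper: the forward direction is the intended iteration of \cref{prop:fair-mult-rewr:1} (whose conclusion is phrased precisely so that it can be re-applied to the remaining instantiations at each intermediate multiset, which resolves the ``obstacle'' you flag), combined with the observation that each step subtracts $F_{\sigma(i)}(\theta_{\sigma(i)})$ exactly and adds $G_{\sigma(i)}(\theta_{\sigma(i)},\xi_{\sigma(i)})$, so the endpoint is the order-independent signed sum. Your counterexample also checks out: $F_i \cap \olap{M_0}{F_1,F_2,F_3} = \ms{a},\ms{a} \not\subseteq G_i$, yet every ordering of the three firings leaves at least two copies of $\ms{a}$ available at each stage and terminates at the common multiset $\ms{a},\ms{b}_1,\ms{b}_2,\ms{b}_3$.
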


For the remainder of this section, assume that if $(M_0,(r_i;\delta_i)_i)$ is a fair trace, then its MRS is interference-free from $M_0$.
Interference-freedom implies the ability to safely permute finitely many steps that do not depend on each other.
However, it is not obvious that finite permutations, let alone infinite permutations, preserve fairness.
To show that they do, we use the following lemma to reduce arguments about infinite permutations to arguments about finite permutations:

\begin{lemma}
  \label[lemma]{lemma:main:9}
  For all $n \in \N$ and permutations $\sigma : \N \to \N$, set $\chi_\sigma(n) = \sup_{k \leq n} \sigma^{-1}(k)$.
  Then there exist permutations $\tau, \rho : \N \to \N$ such that $\sigma = \rho \circ \tau$, $\tau(k) = k$ for all $k > \chi_\sigma(n)$, and $\rho(k) = k$ for all $k \leq n$.
\end{lemma}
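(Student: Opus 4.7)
The plan is to construct $\tau$ so that it fixes every index above $N := \chi_\sigma(n)$ while moving the $n+1$ preimages $\sigma^{-1}(\{0,\dotsc,n\})$ onto the slots $\{0,\dotsc,n\}$ exactly as $\sigma$ does; then setting $\rho := \sigma \circ \tau^{-1}$ will automatically make $\rho$ fix $\{0,\dotsc,n\}$ pointwise.

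First I would observe that $\chi_\sigma(n)$ is the maximum of the finitely many natural numbers $\sigma^{-1}(0), \dotsc, \sigma^{-1}(n)$, so it is some $N \in \N$. Setting $A := \sigma^{-1}(\{0,\dotsc,n\})$, we have $|A| = n+1$ and $A \subseteq \{0,\dotsc,N\}$ by the definition of $N$; put $B := \{0,\dotsc,N\} \setminus A$, a set of size $N - n$.

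Next I would define $\tau$ piecewise: $\tau(k) := \sigma(k)$ for $k \in A$ (so $\tau|_A$ is a bijection $A \to \{0,\dotsc,n\}$); let $\tau|_B$ be the order-preserving bijection $B \to \{n+1,\dotsc,N\}$; and set $\tau(k) := k$ for $k > N$. Since $A$, $B$, and $\{k \in \N : k > N\}$ partition $\N$, and the three corresponding images $\{0,\dotsc,n\}$, $\{n+1,\dotsc,N\}$, $\{k \in \N : k > N\}$ also partition $\N$, $\tau$ is a permutation of $\N$; and by construction $\tau(k) = k$ for all $k > N$.

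Finally I would set $\rho := \sigma \circ \tau^{-1}$, so that $\sigma = \rho \circ \tau$ holds by definition. To verify $\rho(k) = k$ for $k \leq n$, note that $\tau^{-1}(k) \in A$ (since $\tau(A) = \{0,\dotsc,n\}$) and $\tau$ agrees with $\sigma$ on $A$, so $\tau^{-1}(k) = \sigma^{-1}(k)$, giving $\rho(k) = \sigma(\sigma^{-1}(k)) = k$. There is no real obstacle beyond careful index bookkeeping; the one point worth flagging is that the supremum defining $\chi_\sigma(n)$ is really a maximum over $n+1$ natural numbers, without which the argument would not yield a finite $N$ and the whole factorization could fail.
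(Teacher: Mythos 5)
Your construction is correct: $\chi_\sigma(n)$ is indeed a maximum $N\geq n$, your $\tau$ is a well-defined permutation fixing everything above $N$, and $\rho=\sigma\circ\tau^{-1}$ fixes $\{0,\dotsc,n\}$ because $\tau^{-1}(k)=\sigma^{-1}(k)$ for $k\leq n$. This is essentially the same factorization the paper has in mind (rearrange the finitely many preimages of $\{0,\dotsc,n\}$ within the initial segment $\{0,\dotsc,N\}$ and absorb the rest into $\rho$), so there is nothing to add.
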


The following proposition shows that permutations of prefixes of traces preserve fairness.
Its proof uses a factorization of permutations into cycles permuting adjacent steps, where each cycle preserves~fairness.

\begin{proposition}
  \label[proposition]{prop:fair-mult-rewr:6}
  Consider an MRS $\mc{R}$ that is interference-free from $M_0$ and let $T = (M_0,(r_i;(\theta_i,\xi_i))_{i \in I})$ be a trace, an execution, or a fair execution.
  Let $\sigma \in S_I$ be such that for some $n \in I$, $\sigma(i) = i$ for all $i > n$.
  Then $\gact{\sigma}{T}$ is respectively a trace, an execution, or a fair execution.
\end{proposition}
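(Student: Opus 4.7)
The plan is to factor $\sigma$ into adjacent transpositions and handle each swap using interference-freedom. Since $\sigma(i) = i$ for $i > n$, the restriction $\sigma|_{\{1, \dotsc, n\}}$ is a finite permutation and so decomposes as a product of adjacent transpositions $\tau_k = (k, k+1)$ with $1 \leq k < n$. Because the action on traces is compatible with composition of permutations, it suffices to prove the statement for a single adjacent transposition and then iterate; the three cases (trace, execution, fair execution) can then be handled in turn.

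For a single adjacent transposition $\tau_k$, I would first verify that $\gact{\tau_k}{T}$ is a trace. By the fair tail property (\cref{prop:main:4}) and the standing assumption that fair traces come from interference-free MRSs, $\mc{R}$ is interference-free at $M_{k-1}$. The instantiation $r_k(\theta_k)$ lies among the distinct applicable instantiations at $M_{k-1}$, and using \cref{prop:fair-mult-rewr:3} together with interference-freedom one argues that $r_{k+1}(\theta_{k+1})$ must lie among them as well. Instantiating the commuting diagram of interference-freedom on the two-element subset $\{r_k(\theta_k), r_{k+1}(\theta_{k+1})\}$ then yields a valid segment $M_{k-1} \xrightarrow{(r_{k+1};\cdot)} M_k' \xrightarrow{(r_k;\cdot)} M_{k+1}$, after which the steps at positions $\geq k+2$ are unchanged and start from the same state $M_{k+1}$.

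Preservation of maximality (and hence execution status) is immediate, since $\sigma$ does not change the number of steps. For preservation of fairness, the argument splits on position. For positions $i \geq n$ the permutation acts trivially, the states coincide ($M_i' = M_i$), and fairness transfers directly from $T$. For $i < n$, if $r(\theta)$ is applicable at the permuted state $M_i'$, then by interference-freedom on the block of rules applied between positions $i$ and $n$, either $r(\theta)$ is equivalent to some step of $\gact{\sigma}{T}$ at an index in $(i, n]$, or it remains applicable at $M_n' = M_n$, in which case fairness of $T$ past $n$ supplies the required witness at the same index (since $\sigma$ fixes it).

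The main obstacle is showing that $r_{k+1}(\theta_{k+1})$ really is applicable at $M_{k-1}$, rather than being ``newly enabled'' by $r_k$. This is the substantive use of interference-freedom: the commuting diagram over all distinct applicable rules at $M_{k-1}$, combined with \cref{prop:fair-mult-rewr:3}, should force any rule applicable at $M_k$ distinct from $r_k$ to already be among those applicable at $M_{k-1}$. Packaging this carefully — and tracking the refreshing substitutions needed to reuse the fresh-constant substitution $\xi_{k+1}$ at the new position, so that condition~(2) of the trace definition is preserved — is the delicate bookkeeping the proof must get right.
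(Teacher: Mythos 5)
Your overall strategy---factor $\sigma$ into adjacent swaps and justify each swap by interference-freedom---is in the same spirit as the paper's argument, which factors the permutation into cycles permuting adjacent steps. But the step you yourself flag as ``the main obstacle'' is where the proposal breaks down: it is not true that interference-freedom forces every instantiation applicable at $M_k$ and distinct from $r_k(\theta_k)$ to have already been applicable at $M_{k-1}$. Interference-freedom on $M_{k-1}$ together with \cref{prop:fair-mult-rewr:3} gives only \emph{forward} persistence (the instantiations applicable at $M_{k-1}$ other than the one applied remain applicable at $M_k$); it says nothing about instantiations that are \emph{newly enabled} by the step $r_k(\theta_k)$. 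Concretely, take the rules $a : \ms{p} \to \exists n.\,\ms{q}(n)$ and $b : \forall x.\,\ms{q}(x) \to \ms{r}(x)$ and the execution $\ms{p} \to \ms{q}(c) \to \ms{r}(c)$. Every reachable multiset has at most one applicable instantiation, so this MRS is vacuously interference-free from $\ms{p}$, yet the transposition $(1,2)$ does not produce a trace: $b([c/x])$ is not applicable to $\ms{p}$, since the constant $c$ introduced by $\xi_1$ does not yet exist. Hence an arbitrary adjacent transposition of a trace over an interference-free MRS need not yield a trace, and your reduction cannot establish the ``trace'' part of the conclusion as set up.

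The repair requires organizing the factorization so that every intermediate sequence is itself a trace: you may only swap two adjacent steps when both instantiations occur among the distinct instantiations applicable at their common predecessor multiset (that is exactly where the commuting diagram earns its keep, and where condition~(2) of the trace definition must be re-verified, as you note). In other words, the decomposition into cycles or transpositions must be chosen relative to the dependency structure of $T$ and the target ordering --- compare \cref{lemma:main:1}, which pulls step $n$ to the front with the cycle $(1,\dotsc,n)$ only after establishing that the $n$-th instantiation is already applicable at $M_0$. As written, your argument covers only those $\sigma$ whose chosen transposition decomposition never moves a step past one that enables it; the general case, and the bookkeeping that makes each intermediate stage a legitimate trace, is precisely the content that is missing.
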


\begin{corollary}
  \label[corollary]{prop:main:6}
  Fairness is invariant under permutation, that is, if $\mc{R}$ is interference-free from $M_0$, $T$ is a fair trace from $M_0$, and $\Sigma = \gact{\sigma}{T}$ is a permutation of $T$, then $\Sigma$ is also fair.
\end{corollary}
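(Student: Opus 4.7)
The plan is to verify fairness of $\Sigma$ at an arbitrary index $i$ by bringing finitely many steps of $T$ to the front via a finite-support permutation that matches $\sigma$ on the first $i$ positions, then appealing to the Fair Tail Property to locate the desired future occurrence. Fix an index $i \in I$ and a rule instantiation $r(\theta)$ applicable to the $i$-th multiset of $\Sigma$; the goal is to produce some $j > i$ such that the $j$-th step of $\Sigma$ is equivalent to $r(\theta)$.

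First I would set $S = \sigma(\{1, \dotsc, i\})$ and $N = \max S$, and construct a permutation $\pi \in S_I$ with $\pi(k) = \sigma(k)$ for $k \leq i$, $\pi$ a bijection from $\{i+1, \dotsc, N\}$ onto $\{1, \dotsc, N\} \setminus S$, and $\pi(k) = k$ for all $k > N$. Because $\pi$ fixes every index past $N$, \cref{prop:fair-mult-rewr:6} guarantees that $\gact{\pi}{T}$ is a fair trace; because $\pi$ and $\sigma$ agree on $\{1, \dotsc, i\}$, the first $i$ steps of $\gact{\pi}{T}$ coincide with those of $\Sigma$, so the two traces reach the same $i$-th multiset. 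Next I would invoke \cref{prop:main:4} on $\gact{\pi}{T}$ to obtain a fair tail starting at this common multiset; applied to $r(\theta)$, its fairness yields an index $j' \geq 1$ such that the $(i + j')$-th step of $\gact{\pi}{T}$---equivalently, the $\pi(i+j')$-th step of $T$---is equivalent to $r(\theta)$, and this step occurs in $\Sigma$ at position $j = \sigma^{-1}(\pi(i+j'))$.

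The obstacle that remains is confirming $j > i$, and this is where the construction of $\pi$ pays off. For $k > i$ with $k \leq N$, $\pi(k)$ lies in $\{1, \dotsc, N\} \setminus S$ by design, while for $k > N$, $\pi(k) = k > N \geq \max S$; in both cases $\pi(k) \notin S$. Specialising to $k = i + j'$ gives $\pi(i + j') \notin S = \sigma(\{1, \dotsc, i\})$, and hence $j = \sigma^{-1}(\pi(i + j')) > i$ by bijectivity of $\sigma$, as required. The remaining work is routine bookkeeping aligning indices among $T$, $\gact{\pi}{T}$, and $\Sigma$, so the genuine content of the argument is the reduction to Propositions~\ref{prop:fair-mult-rewr:6} and~\ref{prop:main:4} via the explicit finite rearrangement $\pi$.
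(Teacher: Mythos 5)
Your proof is correct and follows essentially the same route as the paper's: reduce to a finite-support permutation that agrees with $\sigma$ on the relevant initial segment, invoke \cref{prop:fair-mult-rewr:6} to conclude that rearrangement is fair, and then check that the witness it produces lands strictly after position $i$ in $\Sigma$. The only differences are cosmetic: you construct the finite permutation $\pi$ explicitly and verify $\pi(k)\notin S$ for $k>i$ by hand, where the paper instead appeals to the factorization $\sigma=\rho\circ\tau$ of \cref{lemma:main:9}, and your detour through \cref{prop:main:4} is harmless but unnecessary since the fairness condition already quantifies over every index of the trace.
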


\begin{proof}
  Let $T = (M_0, (t_i;\delta_i)_i)$ and $\delta_i = (\theta_i, \xi_i)$, and let $\Sigma$ be the trace $M_0 = \Sigma_0 \xrightarrow{(t_{\sigma(1)};\delta_{\sigma(1)})} \Sigma_1 \xrightarrow{(t_{\sigma(2)};\delta_{\sigma(2)})} \cdots$.
  Consider some rule $r \in \mc{R}$ such that $\Sigma_i \xrightarrow{(r;(\theta,\xi))} \Sigma_i'$.
  We must show that there exists a $j$ such that $\sigma(j) > \sigma(i)$, $t_{\sigma(j)}(\theta_{\sigma(j)}) \equiv r(\theta)$.

  Let the factorization $\sigma = \rho \circ \tau$ be given by \cref{lemma:main:9} for $n = \sigma(i)$.
  By \cref{prop:fair-mult-rewr:6}, we get that $\gact{\tau}{T}$ is fair.
  Moreover, by construction of $\tau$, $\gact{\tau}{T}$ and $\Sigma$ agree on the first $n$ steps and $n + 1$ multisets.
  By fairness, there exists a $k > \sigma(i)$ such that the $k$-th step in $\gact{\tau}{T}$ is $r(\theta)$.
  By construction of $\rho$, $\rho(k) > \sigma(i)$, so this step appears after $\Sigma_i$ in $\Sigma$ as desired.
  We conclude that $\Sigma$ is fair.
\end{proof}

\Cref{prop:main:6} established that permutations preserve fairness.
Relatedly, all fair traces from a given multiset are permutations of each other.
To do show this, we construct a potentially infinite sequence of permutations and use the following lemma to compose them:

\begin{lemma}
  \label[lemma]{lemma:main:2}
  Let $(\sigma_n)_{n \in I}$ be a family of bijections on $I$ such that for all $m < n$,
  \[
    (\sigma_n \circ \dotsb \circ \sigma_1)(m) = (\sigma_m \circ \dotsb \circ \sigma_1)(m).
  \]
  Let $\sigma : I \to I$ be given by $\sigma(m) = (\sigma_m \circ \dotsb \circ \sigma_1)(m)$.
  Then $\sigma$ is injective, but need not be surjective.
\end{lemma}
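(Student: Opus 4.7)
My plan is to dispatch injectivity by a one-line application of the stabilization hypothesis and then construct an explicit counterexample to surjectivity on $I = \N^+$. For injectivity: suppose $\sigma(m) = \sigma(n)$ with $m < n$. The hypothesis gives $(\sigma_n \circ \dotsb \circ \sigma_1)(m) = (\sigma_m \circ \dotsb \circ \sigma_1)(m) = \sigma(m)$, while by definition $(\sigma_n \circ \dotsb \circ \sigma_1)(n) = \sigma(n)$. Since $\sigma_n \circ \dotsb \circ \sigma_1$ is a bijection, we get $m = n$, contradicting $m < n$. The whole point of the hypothesis is precisely to let us view $\sigma(m)$ and $\sigma(n)$ as images under a common ambient bijection.

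For non-surjectivity, I would take $I = \N^+$ and set $\sigma_n = (1, n+1)$, the transposition swapping $1$ with $n+1$. Each $\sigma_n$ is evidently a bijection. By a short induction one checks that $(\sigma_{n-1} \circ \dotsb \circ \sigma_1)(n) = 1$: for $k \leq n-2$ the transposition $\sigma_k$ fixes $n$ (since $k+1 < n$ and $n \geq 2$), and then $\sigma_{n-1}$ sends $n$ to $1$. Applying $\sigma_n$ gives $\sigma(n) = n + 1$, so $\sigma$ is the shift, and hence injective but not surjective, missing $1$. The stabilization hypothesis also holds: for $m < k$, the transposition $\sigma_k = (1, k+1)$ fixes $m+1$, since $m+1 \geq 2 \neq 1$ and $m + 1 \neq k + 1$; thus $(\sigma_n \circ \dotsb \circ \sigma_{m+1})(m+1) = m+1$, matching $(\sigma_m \circ \dotsb \circ \sigma_1)(m) = m+1$.

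The main obstacle is choosing a family that satisfies the stabilization hypothesis while still letting the composite \emph{escape} at every position. A first-instinct choice such as $\sigma_n = (n, n+1)$ fails the hypothesis because later transpositions keep disturbing values placed by earlier ones. Using a common pivot~$1$ together with a moving endpoint $n+1$ avoids this: position $1$ acts as a staging slot, and once a value has been moved out of position $1$ into position $m+1$, every subsequent $\sigma_k$ with $k > m$ swaps $1$ with some $k + 1 > m + 1$ and therefore leaves that earlier placement untouched.
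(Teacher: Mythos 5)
Your proof is correct and is essentially the canonical argument: the stabilization hypothesis exists precisely so that $\sigma(m)$ and $\sigma(n)$ can both be read off as images under the single bijection $\sigma_n \circ \dotsb \circ \sigma_1$, which immediately forces injectivity. Your counterexample checks out as well: with $I = \N^+$ and $\sigma_n = (1, n+1)$ the hypothesis holds because each later transposition fixes $m+1$, the composite is the shift $n \mapsto n+1$, and $1$ is not in its image; note that the counterexample must indeed use an infinite $I$, since on $\card{n}$ injectivity would imply surjectivity.
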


\begin{lemma}
  \label[lemma]{lemma:main:1}
  Let $\mc{R}$ be interference-free from $M_0$.
  Consider a fair execution $T = (M_0, (r_i; (\theta_i, \xi_i))_{i \in I})$ and a step $M_0 \xrightarrow{(t;(\tau,\rho))} M_1'$.
  Set $n = \phi_T(0,t,\tau)$ (so $t(\tau) \equiv r_n(\theta_n)$).
  Then $\gact{(1,\dotsc,n)}{T}$ is a permutation of $T$ with $(t;(\tau,\xi_n))$ as its first step, and it is a fair execution.
\end{lemma}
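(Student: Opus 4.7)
The proof plan is to reduce both conclusions directly to \cref{prop:fair-mult-rewr:6}, which does the real work. Let $\sigma = (1, \dotsc, n)$ denote the cyclic permutation of $I$ that places index $n$ at position $1$ while shifting the indices $1, \dotsc, n{-}1$ rightward by one, leaving every index $i > n$ fixed. Since $\sigma(i) = i$ for all $i > n$, \cref{prop:fair-mult-rewr:6} applied to the fair execution $T$ and to this $\sigma$ yields immediately that $\gact{\sigma}{T}$ is a permutation of $T$ and is itself a fair execution. This already establishes the permutation and fairness parts of the conclusion with no additional effort.

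It then remains to identify the first step of $\gact{\sigma}{T}$. Unfolding the definition of the group action, this first step is the original step of $T$ at position $\sigma(1) = n$, namely $(r_n; (\theta_n, \xi_n))$. By the definition of $n = \phi_T(0, t, \tau)$ recorded just before the lemma, we have $r_n(\theta_n) \equiv t(\tau)$. The paragraph introducing equivalence of instantiations explicitly notes that application does not distinguish between equivalent instantiations: applying $r_n$ with $(\theta_n, \xi_n)$ to $M_0$ produces the same multiset as applying $t$ with $(\tau, \xi_n)$, up to a refreshing substitution. Since traces are identified up to refreshing substitutions, the first step of $\gact{\sigma}{T}$ may be written as $(t; (\tau, \xi_n))$, as claimed.

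The only delicate point is the passage, in the previous paragraph, from equivalence of rule instantiations to the identification of full trace steps modulo refreshing. Because $F_{r_n}(\theta_n) = F_t(\tau)$ and $\exists \vec n . G_{r_n}(\theta_n, \vec n) = \exists \vec n . G_t(\tau, \vec n)$ up to renaming of bound variables, both candidate first steps produce the identical intermediate multiset $M_1$ up to a refreshing of $\xi_n$, and the subsequent steps are unaffected. Hence the two descriptions determine the same trace under the paper's identification. I expect no substantive obstacle beyond this bookkeeping about refreshing substitutions; the bulk of the work has already been invested into \cref{prop:fair-mult-rewr:6}.
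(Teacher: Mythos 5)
Your identification of the first step (via equivalence of instantiations and identification of traces up to refreshing substitutions) is fine, but the appeal to \cref{prop:fair-mult-rewr:6} cannot carry the weight you place on it, and the resulting gap is exactly where the content of the lemma lives. Read literally, \cref{prop:fair-mult-rewr:6} would assert that \emph{every} $\sigma$ fixing the tail of a trace yields a trace, which is false even for interference-free systems: in \cref{ex:fair-mult-rewr:1}, transposing the two steps of the displayed execution does not yield a trace, because $e_1([q',1/x,y])$ needs the judgment $\enq(q',1)$ that only the first step produces, yet that MRS is non-overlapping (hence, by \cref{prop:main:13}, interference-free) from $Q,\enq(q,1)$. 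So the proposition must be understood as saying that a permutation of a finite prefix \emph{which is itself a trace} inherits execution-hood and fairness --- consistent with how \cref{prop:main:6} separately hypothesizes that $\gact{\sigma}{T}$ is a permutation of $T$. Your proof never establishes that $\gact{(1,\dotsc,n)}{T}$ is a trace, and nothing in your argument uses the two hypotheses that make it one: that $t(\tau)$ is applicable to $M_0$ (witnessed by the given step $M_0 \xrightarrow{(t;(\tau,\rho))} M_1'$), and that $n = \phi_T(0,t,\tau)$ is the \emph{least} index at which an instantiation equivalent to $t(\tau)$ fires.

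The missing argument is an induction on $i < n$. Since $t(\tau)$ is applicable to $M_0$ and, by minimality of $n$, $r_i(\theta_i) \not\equiv t(\tau)$ for all $i < n$, interference-freedom together with \cref{prop:fair-mult-rewr:3} shows that $t(\tau)$ remains applicable to each $M_i$ with $i < n$ and commutes with each $r_i(\theta_i)$ there. One can therefore move step $n$ to the front by $n-1$ adjacent transpositions, each producing a trace by the commuting diagram at the relevant multiset, with the multisets from position $n$ onward unchanged; maximality is preserved, so the result is an execution. Only once trace-hood is in hand can fairness be transferred, whether by \cref{prop:fair-mult-rewr:6} or directly. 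As it stands, your proof would go through verbatim for an arbitrary index $n$ and an arbitrary step of $T$, with no applicability hypothesis at $M_0$ --- a claim that the queue example above refutes --- which is a reliable sign that the essential hypotheses have not been engaged.
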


\begin{proposition}
  \label[proposition]{prop:main:1}
  If $\mc{R}$ is interference-free from $M_0$, then all fair executions from $M_0$ are permutations of each other.
\end{proposition}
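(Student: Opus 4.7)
The plan is to iteratively permute $T$ so that longer and longer prefixes agree with $T'$, then use \cref{lemma:main:2} to glue these finite permutations into a single limit permutation.

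Given two fair executions $T = (M_0, (r_i;(\theta_i,\xi_i))_{i \in I})$ and $T' = (M_0, (r_i';(\theta_i',\xi_i'))_{i \in I})$ from $M_0$, I would construct by induction on $n$ a family of cyclic permutations $(\sigma_n)_{n \in I}$ in $S_I$ with each $\sigma_n$ fixing every index below $n$, together with fair executions $T^{(n)} := \gact{\sigma_n \circ \cdots \circ \sigma_1}{T}$ whose first $n$ steps agree (up to refreshing) with those of $T'$. Take $T^{(0)} := T$. For the step case, the fair tail property (\cref{prop:main:4}) shows that the suffix of $T^{(n)}$ from position $n+1$ is a fair execution starting from the multiset $M_n'$ reached by $T'$ after $n$ steps; applying \cref{lemma:main:1} to this suffix, with the $(n+1)$-th step of $T'$ as the target first step, produces $\sigma_{n+1}$ (a cycle acting only on indices $\geq n+1$) and hence the desired $T^{(n+1)}$.

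Since each $\sigma_k$ fixes every index below $k$, the compatibility hypothesis of \cref{lemma:main:2} is satisfied, so I would define $\sigma(m) := (\sigma_m \circ \cdots \circ \sigma_1)(m)$; the lemma gives that $\sigma : I \to I$ is injective. By construction the $m$-th step of $T'$ coincides (up to refreshing) with the $\sigma(m)$-th step of $T$, so $T' = \gact{\sigma}{T}$.

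The hard part will be proving that $\sigma$ is surjective, since \cref{lemma:main:2} explicitly warns that the limit function it produces need not be. My plan is to run the whole construction a second time with the roles of $T$ and $T'$ exchanged, producing an injection $\sigma' : I \to I$ with $T = \gact{\sigma'}{T'}$. Composition then gives $T = \gact{\sigma \circ \sigma'}{T}$, meaning that, up to refreshing, the $i$-th step of $T$ equals its $(\sigma \circ \sigma')(i)$-th step for every $i \in I$. Because every step of $T$ introduces fresh constants disjoint from every preceding multiset and from the fresh constants of every other step, the multiset sequence of $T$ together with the positions at which each fresh constant first appears pins down each index uniquely; this forces $\sigma \circ \sigma' = \mathrm{id}_I$, and symmetrically $\sigma' \circ \sigma = \mathrm{id}_I$. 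Hence $\sigma$ and $\sigma'$ are mutually inverse bijections, and $T'$ is a genuine permutation of $T$.
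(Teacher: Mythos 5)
Your construction of the injection is essentially the paper's: use \cref{prop:main:4} and \cref{lemma:main:1} to produce cycles $\sigma_{n+1}$ acting only on indices $\geq n+1$ so that ever longer prefixes of $\gact{\sigma_n\circ\dotsb\circ\sigma_1}{T}$ agree with $T'$, then glue the stages with \cref{lemma:main:2}. That part is fine. The gap is in your surjectivity argument. The paper obtains surjectivity directly from fairness; you instead run the construction in both directions and try to force $\sigma\circ\sigma'=\mathrm{id}_I$ from the identity $T=\gact{\sigma\circ\sigma'}{T}$, on the grounds that fresh constants pin down the position of each step. That premise is false for the systems in scope: the existential prefix of a rewrite rule may be empty, so a step need not introduce any fresh constants, and a trace may contain infinitely many pairwise identical steps (same rule, same matching substitution, empty fresh-constant substitution). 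The paper's own divergent process $\tFix{\omega}{\omega}$ is a counterexample: its unique fair execution applies the instantiation of \eqref{eq:main:2} to $\jproc{a}{\tFix{\omega}{\omega}}$ verbatim at every position, so $\gact{\pi}{T}=T$ holds for every injection $\pi$, e.g.\ $\pi(i)=i+1$, which is not surjective. Hence $T=\gact{\sigma\circ\sigma'}{T}$ does not force $\sigma\circ\sigma'=\mathrm{id}_I$, and two injections whose composite merely fixes $T$ tell you nothing about the surjectivity of either.

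To close the gap you must argue as the paper does: if some index $j$ of $T$ were never selected by the construction, the instantiation $r_j(\theta_j)$ would remain applicable to every stage of the permuted tail (by interference-freedom, \cref{prop:fair-mult-rewr:3}), and fairness of the executions guarantees that an equivalent instantiation is fired in $T'$ often enough that the least-index choice $\phi_T$ in \cref{lemma:main:1} must eventually reach $j$. In other words, surjectivity is exactly where fairness earns its keep in this proposition, and it cannot be replaced by the symmetry argument you propose.
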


\begin{proof}[Proof (Sketch)]
  Consider traces $R = (R_0, (r_i; (\theta_i, \xi_i))_{i \in I})$ and $T = (T_0, (t_j; (\tau_j, \zeta_j))_{j \in J})$ where $R_0 = M_0 = T_0$.
  We construct a sequence of permutations $\sigma_0, \sigma_1, \dotsc$, where $\Phi_0 = R$ and the step $\Phi_{n + 1} = \gact{\sigma_{n + 1}}{\Phi_n}$ is given by \cref{lemma:main:1} such that $\Phi_{n + 1}$ agrees with $T$ on the first $n + 1$ steps.
  We then assemble these permutations $\sigma_n$ into an injection $\sigma$ using \cref{lemma:main:2}; fairness ensures that it is a surjection.
  We have $T = \gact{\sigma}{R}$ by construction.
\end{proof}

Let the support of a multiset $M = (S, m)$ be the set $\supp(M) = \{ s \in S \mid m(s) > 0 \}$.
We say that two traces $T = (M_0;(r_i,\delta_i)_I)$ and $T'$ are \defin{union-equivalent} if $T'$ can be refreshed to a trace $(N_0;(s_j,\rho_j)_j)$ such that the unions of the supports of the multisets in the traces are equal, \ie, such that
\[
  \bigcup_{i \geq 0} \supp(M_i) = \bigcup_{j \geq 0} \supp(N_j)
\]

\begin{lemma}
  \label[lemma]{lemma:obssem:6}
  Consider an MRS and assume $T$ is a permutation of $S$.
  Then $T$ and $S$ are union-equivalent.
\end{lemma}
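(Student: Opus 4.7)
The plan is to reduce union-equivalence to the claim that the two traces touch exactly the same set of judgments. Since $T$ is by assumption a permutation of $S$, it is already a trace, and it uses the very same fresh-constant substitutions $\xi_i$ as $S$ (merely in a different order). Consequently, no genuine refreshing of $T$ is required; I will take the identity as the refreshing substitution in the definition of union-equivalence and show directly that the resulting unions of supports agree.

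The first technical step will be to establish a structural identity for any trace. Writing $S = (M_0, (r_i;(\theta_i,\xi_i))_{i \in I})$ with each $r_i : \forall \vec x_i . F_i(\vec x_i) \to \exists \vec n_i.\, G_i(\vec x_i, \vec n_i)$, I claim
\[
\bigcup_{k \geq 0} \supp(M_k) \;=\; \supp(M_0) \;\cup\; \bigcup_{i \in I} \supp\!\bigl(G_i(\theta_i, \xi_i)\bigr).
\]
This is a routine induction on the prefix length: applicability of $r_i(\theta_i)$ to $M_{i-1}$ forces $F_i(\theta_i) \subseteq M_{i-1}$, so the consumed active multiset contributes nothing new to the running union when it is removed, while any genuinely new judgments at step $i$ lie in $G_i(\theta_i, \xi_i)$.

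The second step is to observe that the right-hand side of this identity depends only on the initial multiset $M_0$ and on the \emph{unordered} family of instantiations $\{(r_i, \theta_i, \xi_i)\}_{i \in I}$; it is manifestly insensitive to the order in which the steps are taken. Applying the same identity to $T = \gact{\sigma}{S}$ and reindexing through the bijection $\sigma \in S_I$ yields exactly the same right-hand side, so the unions of supports of the multisets of $S$ and $T$ coincide, which is union-equivalence.

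The only obstacle I anticipate is bookkeeping rather than mathematical: I must verify that the identity is an admissible refreshing substitution for $T$, namely that the substitutions $\xi_{\sigma(j)}$, applied in the permuted order, genuinely produce a trace (each $\xi_{\sigma(j)}$ one-to-one; constants introduced by $\xi_{\sigma(j)}$ disjoint from the multisets preceding it in $T$). This is immediate from the hypothesis that $T$ is a permutation of $S$, since the paper's definition of "permutation of a trace" already stipulates that the reordered sequence is itself a trace.
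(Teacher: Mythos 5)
Your proposal is correct and follows essentially the same route as the paper's proof: the paper likewise observes that every judgment appearing in some $M_n$ comes either from $M_0$ or from the result $G_i(\theta_i,\xi_i)$ of some applied rule, and that this data is shared (up to reordering) by a trace and any permutation of it. Your version merely makes the underlying identity $\bigcup_k \supp(M_k) = \supp(M_0) \cup \bigcup_i \supp(G_i(\theta_i,\xi_i))$ and the vacuity of the refreshing substitution explicit, which is a faithful elaboration rather than a different argument.
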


\begin{proof}
  Consider a trace $(M_0,(r_i;\delta_i)_i)$.
  For all $n$, each judgment in $M_n$ appears either in $M_0$ or in the result of some rule $r_i$ with $i \leq n$.
  Traces $T$ and $S$ start from the same multiset and have the same rules.
  It follows that they are union-equivalent.
\end{proof}

\Cref{prop:obssem:11} will be key in \cref{sec:observ-comm} to showing that processes have unique observations.

\begin{corollary}
  \label[corollary]{prop:obssem:11}
  If $\mc{R}$ is interference-free from $M$, then all fair executions from $M$ are union-equivalent.
\end{corollary}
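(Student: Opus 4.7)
The plan is to simply chain together the two results immediately preceding the corollary. By \cref{prop:main:1}, interference-freedom from $M$ implies that any two fair executions from $M$ are permutations of one another. By \cref{lemma:obssem:6}, any two traces related by a permutation are union-equivalent. Composing these two facts gives the desired conclusion.

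More concretely, I would take two arbitrary fair executions $R$ and $T$ from $M$, invoke \cref{prop:main:1} to obtain a permutation $\sigma$ with $T = \gact{\sigma}{R}$, and then invoke \cref{lemma:obssem:6} to conclude that $R$ and $T$ are union-equivalent. Since $R$ and $T$ were arbitrary, union-equivalence holds between all pairs of fair executions from $M$.

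There is essentially no obstacle here, as both ingredients have already been established; the only thing to check is that the transitive use of the two lemmas is sound, which it is because union-equivalence is manifestly an equivalence relation on traces (equality of unions of supports modulo refreshing substitutions is reflexive, symmetric, and transitive). In other words, the corollary is a two-line consequence, and the proof need not do anything beyond citing \cref{prop:main:1,lemma:obssem:6}.
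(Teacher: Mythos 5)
Your proposal is correct and matches the paper's intended argument exactly: the corollary follows by chaining \cref{prop:main:1} (fair executions from $M$ are permutations of each other under interference-freedom) with \cref{lemma:obssem:6} (permutations are union-equivalent). The aside about union-equivalence being an equivalence relation is harmless but unnecessary, since \cref{prop:main:1} already relates any two fair executions directly by a single permutation.
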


\section{Session-Typed Processes}
\label{sec:sess-typed-proc}

Session types specify communication protocols between communicating processes.
In this section, we present a session-typed language arising from a proofs-as-programs interpretation of intuitionistic linear logic~\cite{caires_pfenning_2010:_session_types_intuit_linear_propos} extended to support recursive processes and recursive types.

We let $A,B,C$ range over session types and $a,b,c$ range over channel names.
A process $P$ provides a distinguished service $A_0$ over some channel $c_0$, and may use zero or more services $A_i$ on channels $c_i$.
In this sense, a process $P$ is a server for the service $A_0$, and a client of the services $A_i$.
The channels $c_1 : A_1, \dotsc, c_n : A_n$ form a linear context $\Delta$.
We write $\Delta \vdash P :: c_0 : A_0$ to capture these data.
We also allow $P$ to depend on process variables $p_i$ of type $\Tproc{b:B}{\Delta}$.
Values of type $\Tproc{b:B}{\Delta}$ are processes $Q$ such that $\Delta \vdash Q :: b : B$.
We write $\Pi$ for structural contexts of process variables $p_i : \Tproc{a_i:A_i}{\Delta_i}$.
These data are captured by the judgment $\jtypem{\Pi}{\Delta}{P}{c_0}{A_0}$, and we say that $P$ is closed if $\Pi$ is empty.

At any given point in a computation, communication flows in a single direction on a channel $c : A$.
The direction of communication is determined by the \textit{polarity} of the type $A$, where session types are partitioned as positive or negative~\cite{pfenning_griffith_2015:_polar_subst_session_types}.
Consider a process judgment $\jtypem{\Pi}{\Delta}{P}{c_0}{A_0}$.
Communication on positively-typed channels flows from left-to-right in this judgment: if $A_0$ is positive, then $P$ can only send output on $c_0$, while if $A_i$ is positive for $1 \leq i \leq n$, then $P$ can only receive input on $c_i$.
Symmetrically, communication on negatively-typed channels flows from right-to-left in the judgment.
Bidirectional communication arises from the fact that the type of a channel evolves over the course of a computation, sometimes becoming positive, sometimes becoming negative.

Most session types have a polar dual, where the direction of communication is reversed.
With one exception, we only consider positive session types here.
Negative session types pose no difficulty and can be added by dualizing the constructions.
To illustrate this dualization, we also consider the (negative) external choice type $\Tamp {\{l:A_l\}}_{l \in L}$, the polar dual of the (positive) internal choice type $\Tplus \{l:A_l\}_{l \in L}$.

The operational behaviour of closed processes is given by a substructural operational semantics (SSOS) in the form of a multiset rewriting system.
The judgment $\jproc{c}{P}$ means that the closed process $P$ provides a channel $c$.
The judgment $\jmsg{c}{m}$ means the channel $c$ is carrying a message $m$.
Process communication is asynchronous: processes send messages without synchronizing with recipients.
To ensure that messages on a given channel are received in order, the $\jmsg{c}{m}$ judgment encodes a queue-like structure similar to the queues of \cref{ex:fair-mult-rewr:1}, and we ensure that each channel name $c$ is associated with at most one $\jmsg{c}{m}$ judgment.
For example, the multiset $\jmsg{c_0}{m_0; c_0 \leftarrow c_1}, \jmsg{c_1}{m_0; c_1 \leftarrow c_2}, \dotsc$ captures the queue of messages $m_0,m_1,\dotsc$ on $c_0$.
There is no global ordering on sent messages: messages sent on different channels can be received out of order.
We extend the usual SSOS with a new persistent judgment, $\jctype{c}{A}$, which means that channel $c$ has type $A$.

The \defin{initial configuration} of $\jtypem{\cdot}{c_1:A_1,\dotsc,c_n:A_n}{P}{c_0}{A_0}$ is the multiset
\[
  \jproc{c_0}{P}, \jctype{c_0}{A_0}, \dotsc, \jctype{c_n}{A_n}.
\]
A \defin{process trace} is a trace from the initial configuration of a process, and a multiset in it is a \defin{configuration}.
A \defin{fair execution} of ${\jtypem{\cdot}{\Delta}{P}{c}{A}}$ is a fair execution from its initial configuration.

We give the typing rules and the substructural operational semantics in \cref{sec:statics-dynamics}.
In \cref{sec:properties-traces}, we study properties of process traces and fair executions.
In particular, we show that each step in these traces preserves various invariants, that the MRS of \cref{sec:statics-dynamics} is non-interfering from initial process configurations, and that every process has a fair execution.

\subsection{Statics and Dynamics}
\label{sec:statics-dynamics}

The process $\tFwdP{a}{b}$ forwards all messages from the channel $a$ to the channel $b$; it assumes that both channel have the same positive type.
It is formed by \rn{Fwd${}^+$} and its operational behaviour is given by \eqref{eq:obssem:5}.
\begin{gather}
  \infer[\rn{Fwd${}^+$}]{
    \jtypem{\Pi}{a:A}{\tFwdP{a}{b}}{b}{A}
  }{}
  \nonumber\\
  \label{eq:obssem:5}
  \jstep{
    \jmsg{a}{m},
    \jproc{b}{\tFwdP{a}{b}}
  }{
    \jmsg{b}{m}
  }
\end{gather}

Process composition $\tCut{a:A}{P}{Q}$ spawns processes $P$ and $Q$ that communicate over a shared private channel $a$ of type $A$.
It captures Milner's ``parallel composition plus hiding'' operation~\cite[pp.~20f.]{milner_1980:_calcul_commun_system}.
To ensure that the shared channel is truly private, we generate a globally fresh channel $b$ for $P$ and $Q$ to communicate over.
\begin{gather}
  \infer[\rn{Cut}]{
    \jtypem{\Pi}{\Delta_1,\Delta_2}{\tCut{a:A}{P}{Q}}{c}{C}
  }{
    \jtypem{\Pi}{\Delta_1}{P}{a}{A}
    &
    \jtypem{\Pi}{a:A,\Delta_2}{Q}{c}{C}
  }
  \nonumber\\
  \label{eq:76}
  \jstep{
    \jproc{c}{\tCut{a:A}{P}{Q}}
  }{
    \exists b.
    \jproc{b}{[b/a]P},
    \jproc{c}{[b/a]Q},
    \jctype{b}{A}
  }
\end{gather}

The process $\tClose a$ closes a channel $a$ of type $\Tu$ by sending the ``close message'' $\ast$ over $a$.
Dually, the process $\tWait{a}{P}$ blocks until it receives the close message on the channel $a$, and then continues as $P$.
\begin{gather}
  \infer[\rn{$\Tu R$}]{
    \jtypem{\Pi}{\cdot}{\tClose a}{a}{\Tu}
  }{}
  \qquad
  \infer[\rn{$\Tu L$}]{
    \jtypem{\Pi}{\Delta, a : \Tu}{\tWait{a}{P}}{c}{C}
  }{
    \jtypem{\Pi}{\Delta}{P}{c}{C}
  }
  \nonumber\\
  \label{eq:obssem:1}
  \jstep{
    \jproc{a}{\tClose{a}}
  }{
    \jmsg{a}{\ast}
  }
  \\
  \label{eq:obssem:3}
  \jstep{
    \jmsg{a}{\ast},
    \jproc{c}{\tWait{a}{P}}
  }{
    \jproc{c}{P}
  }
\end{gather}

Processes can communicate channels over channels of type $B \Tot A$, where the transmitted channel has type $B$ and subsequent communication has type $A$.
The process $\tSendC{a}{b}{P}$ sends a channel $b$ over channel $a$ and then continues as $P$.
To ensure a queue-like structure for messages on $a$, we generate a fresh channel name $d$ for the ``continuation channel'' that will carry subsequent communications.
The process $\tRecvC{b}{a}{P}$ blocks until it receives a channel over $a$, binds it to the name $b$, and continues as $P$.
Operationally, we rename $a$ in $P$ to the continuation channel $d$ carrying the remainder of the communications.
\begin{gather}
  \infer[\rn{$\Tot R^*$}]{
    \jtypem{\Pi}{\Delta, b : B}{\tSendC{a}{b}{P}}{a}{B \Tot A}
  }{
    \jtypem{\Pi}{\Delta}{P}{a}{A}
  }
  \qquad
  \infer[\rn{$\Tot L$}]{
    \jtypem{\Pi}{\Delta, a : B \Tot A}{\tRecvC{b}{a}{P}}{c}{C}
  }{
    \jtypem{\Pi}{\Delta, a : A, b : B}{P}{c}{C}
  }
  \nonumber\\
  \label{eq:main:1}
  \jstep{
    \jproc{a}{\tSendC{a}{b}{P}},
    \jctype{a}{B \Tot A}
  }{
    \exists d.
    \jproc{d}{[d/a]P},
    \jmsg{a}{\mSendC{a}{b}{d}},
    \jctype{d}{A}
  }
  \\
  \label{eq:obssem:4}
  \jstep{
    \jmsg{a}{\mSendC{a}{e}{d}},
    \jproc{c}{\tRecvC{b}{a}{P}}%
  }{
    \jproc{c}{[e,d/b,a]Q}%
  }
\end{gather}

The internal choice type $\Tplus\{l:A_l\}_{l \in L}$ offers a choice of services $A_l$.
The process $\tSendL{a}{k}{P}$ sends a label $k$ on $a$ to signal its choice to provide the service $A_k$ on $a$.
The process $\tCase{a}{\left\{l \Rightarrow P_l\right\}_{l \in L}}$ blocks until it receives a label $k \in L$ on $a$, and then continues as $P_k$.
\begin{gather}
  \infer[\rn{$\Tplus R_k$}]{
    \jtypem{\Pi}{\Delta}{\tSendL{a}{k}{P}}{a}{{\Tplus\{l:A_l\}}_{l \in L}}
  }{
    \jtypem{\Pi}{\Delta}{P}{a}{A_k}\quad(k \in L)
  }
  \qquad
  \infer[\rn{$\Tplus L$}]{
    \jtypem{\Pi}{\Delta,a:{\Tplus\{l : A_l\}}_{l \in L}}{\tCase{a}{\left\{l \Rightarrow P_l\right\}_{l \in L}}}{c}{C}
  }{
    \jtypem{\Pi}{\Delta,a:A_l}{P_l}{c}{C}\quad(\forall l \in L)
  }
  \nonumber\\
  \label{eq:74}
  \jstep{
    \jproc{a}{\tSendL{a}{k}{P}},
    \jctype{a}{\Tplus\{l:A_l\}_{l \in L}}
  }{
    \exists d.
    \jmsg{a}{\mSendL{a}{k}{d}},
    \jproc{d}{[d/a]P},
    \jctype{d}{A_k}
  }
  \\
  \label{eq:obssem:2}
  \jstep{
    \jmsg{a}{\mSendL{a}{k}{d}},
    \jproc{c}{\tCase{a}{\left\{l \Rightarrow P_l\right\}_{l \in L}}}
  }{
    \jproc{c}{[d/a]P_k}
  }
\end{gather}

To illustrate the duality between positive and negative types, we consider the (negative) external choice type.
It is the polar dual of the (positive) internal choice type.
The external choice type $\Tamp\{l:A_l\}_{l \in L}$ provides a choice of services $A_l$.
The process $\tCase{a}{\left\{l \Rightarrow P_l\right\}_{l \in L}}$ blocks until it receives a label $k \in L$ on $a$, and then continues as $P_k$.
The process $\tSendL{a}{k}{P}$ sends a label $k$ on $a$ to signal its choice to use the service $A_k$ on $a$.
Observe that, where a provider of an internal choice type \emph{sends} a label in \eqref{eq:74}, a provider of the external choice type \emph{receives} a label in \eqref{eq:sess-type-proc:2}.
Analogously, a client of an internal choice type receives \emph{receives} a label in \eqref{eq:obssem:2}, and a client of an external choice type \emph{sends} a label in \eqref{eq:sess-type-proc:1}.
\begin{gather}
  \infer[\rn{$\Tamp R$}]{
    \jtypem{\Psi}{\Delta}{\tCase{a}{\left\{l \Rightarrow P_l\right\}_{l \in L}}}{a}{{\Tamp\{l :A_l \}}_{l \in L}}
  }{
    \jtypem{\Psi}{\Delta}{P_l}{a}{A_l}\quad(\forall l \in L)
  }
  \quad
  \infer[\rn{$\Tamp L_k$}]{
    \jtypem{\Psi}{\Delta,a:{\Tamp\{l : A_l\}}_{l \in L}}{\tSendL{a}{k}{P}}{c}{C}
  }{
    \jtypem{\Psi}{\Delta,a:A_k}{P}{c}{C}
    &
    (k \in L)
  }
  \nonumber\\
  \label{eq:sess-type-proc:2}
  \jstep{
    \jmsg{a}{\mSendL{a}{k}{d}},
    \jproc{a}{\tCase{a}{\left\{l \Rightarrow P_l\right\}_{l \in L}}}
  }{
    \jproc{d}{[d/a]P_k}
  }\\
  \label{eq:sess-type-proc:1}
  \jstep{
    \jproc{c}{\tSendL{a}{k}{P}},
    \jctype{a}{\Tplus\{l:A_l\}_{l \in L}}
  }{
    \exists d.
    \jmsg{a}{\mSendL{a}{k}{d}},
    \jproc{c}{[d/a]P},
    \jctype{d}{A_k}
  }
\end{gather}

A communication of type $\Trec{\alpha}{A}$ is an unfold message followed by a communication of type $[\Trec{\alpha}{A}/\alpha]A$.
The process $\tSendU{a}{P}$ sends an unfold message and continues as $P$.
The process $\tRecvU{a}{P}$ blocks until it receives the unfold message on $a$ and continues as $P$.
\begin{gather}
  \infer[\rn{$\rho^+R$}]{
    \jtypem{\Pi}{\Delta}{\tSendU{a}{P}}{a}{\Trec{\alpha}{A}}
  }{
    \jtypem{\Pi}{\Delta}{P}{a}{[\Trec{\alpha}{A}/\alpha]A}
  }
  \qquad
  \infer[\rn{$\rho^+L$}]{
    \jtypem{\Pi}{\Delta, a : \Trec{\alpha}{A}}{\tRecvU{a}{P}}{c}{C}
  }{
    \jtypem{\Pi}{\Delta, a : [\Trec{\alpha}{A}/\alpha]A}{P}{c}{C}
  }
  \nonumber\\
  \label{eq:main:3}
  \begin{gathered}
    \jstep{
      \jproc{a}{\tSendU{a}{P}},
      \jctype{a}{\Trec{\alpha}{A}}
    }{\qquad\qquad\qquad\qquad\qquad\qquad\qquad\qquad
      \\
      \qquad\qquad
      \exists d.
      \jmsg{a}{\mSendU{a}{d}},
      \jproc{d}{[d/a]P},
      \jctype{d}{[\Trec{\alpha}{A}/\alpha]A}
    }
  \end{gathered}
  \\
  \label{eq:main:4}
  \jstep{
    \jmsg{a}{\mSendU{a}{d}},
    \jproc{c}{\tRecvU{a}{P}}
  }{
    \jproc{c}{[d/a]P}
  }
\end{gather}

Finally, recursive processes are formed in the standard way.
The SSOS is only defined on closed processes, so there are no rules for process variables.
Recursive processes step by unfolding.
\begin{gather}
  \infer[\rn{Var}]{
    \jtypem{\Pi, p : \Tproc{c:C}{\Delta}}{\Delta}{p}{c}{C}
  }{}
  \qquad
  \infer[\rn{Rec}]{
    \jtypem{\Pi}{\Delta}{\tFix{p}{P}}{c}{C}
  }{
    \jtypem{\Pi, p : \Tproc{c:C}{\Delta}}{\Delta}{P}{c}{C}
  }
  \nonumber\\
  \label{eq:main:2}
  \jstep{
    \jproc{c}{\tFix{p}{P}}
  }{
    \jproc{c}{[\tFix{p}{P}/p]P}
  }
\end{gather}

\begin{example}
  \label{conj:sess-type-proc:1}
  The protocol $\ms{conat} = \Trec{\alpha}{(\ms{z} : \Tu) \Tplus (\ms{s} : \alpha)}$ encodes conatural numbers.
  Indeed, a communication is either an infinite sequence of successor labels $\ms{s}$, or some finite number of $\ms{s}$ labels followed by the zero label $\ms{z}$ and termination.
  The following process receives a conatural number $i$ and outputs its increment on $o$:
  \[
    \jtypem{\cdot}{i : \ms{conat}}{\tSendU{o}{\tSendL{s}{o}{\tFwdP{o}{i}}}}{o}{\ms{conat}}.
  \]
  It works by outputting a successor label on $o$, and then forwarding the conatural number $i$ to $o$.
  It has the following fair execution, where we elide $\jctype{c}{A}$ judgments and annotations on the arrows:
  \begin{gather*}
    \jproc{o}{\tSendU{o}{\tSendL{s}{o}{\tFwdP{o}{i}}}}
    \longrightarrow \jmsg{c}{\mSendU{o}{o_1}}, \jproc{o_1}{\tSendL{s}{o_1}{\tFwdP{o_1}{i}}}
    \longrightarrow\\
    \jmsg{o}{\mSendU{o}{o_1}}, \jmsg{o_1}{\mSendL{s}{o_1}{o_2}}, \jproc{o_2}{\tFwd{o_2}{i}}.
  \end{gather*}
  The following recursive process outputs the infinite conatural number $s(s(s(\cdots)))$ on $o$:
  \[
    \jtypem{\cdot}{\cdot}{\tFix{\omega}{\tSendU{o}{\tSendL{s}{o}{\omega}}}}{o}{\ms{conat}}.
  \]
  It has an infinite fair execution where for $n \geq 1$, the rules $r_{3n - 2}$, $r_{3n - 1}$, and $r_{3n}$ are respectively instantiations of \eqref{eq:main:2}, \eqref{eq:main:3}, and \eqref{eq:74}.
\end{example}

\subsection{Properties of Process Traces}
\label{sec:properties-traces}

Let $\mc{P}$ be MRS given by the above rules.
We prove various invariants maintained by process traces.

Let $\fcn(P)$ be the set of free channel names in $P$.
The following result follows by an induction on $n$ and a case analysis on the rule used in the last step:

\begin{proposition}
  \label[proposition]{prop:main:3}
  Let $T = (M_0, (r_i;\delta_i)_i)$ be a process trace.
  For all $n$, if $\jproc{c_0}{P} \in M_n$, then
  \begin{enumerate}
  \item $c_0 \in \fcn(P)$;
  \item for all $c_i \in \fcn(P)$, there exists an $A_i$ such that $\jctype{c_i}{A_i} \in M_n$; and
  \item where $\fcn(P) = \{ c_0, \dotsc, c_m \}$, we have $\jtypem{\cdot}{c_1 : A_1, \dotsc, c_m : A_m}{P}{c_0}{A_0}$.
  \end{enumerate}
  If $\jmsg{c}{m} \in M_n$, then
  \begin{itemize}
  \item if $m = \jmsg{c}{\ast}$, then $\jctype{c}{\Tu} \in M_n$;
  \item if $m = \mSendL{c}{l_j}{d}$, then either $\jctype{c}{\Tplus \{ l_i : A_i \}_{i \in I}} \in M_n$ or $\jctype{c}{\Tamp \{ l_i : A_i \}_{i \in I}} \in M_n$ for some $A_i$ ($i \in I$), and $\jctype{d}{A_j} \in M_n$ for some $j \in I$.
  \item if $m = \mSendC{c}{a}{b}$, then $\jctype{c}{A \Tot B}, \jctype{a}{A}, \jctype{b}{B} \in M_n$ for some $A$ and $B$;
  \item if $m = \mSendU{c}{d}$, then $\jctype{c}{\Trec{\alpha}{A}}, \jctype{d}{[\Trec{\alpha}{A}/\alpha]A} \in M_n$ for some $\Trec{\alpha}{A}$.
  \end{itemize}
\end{proposition}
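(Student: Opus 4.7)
The plan is to proceed by induction on $n$, with a case analysis on the rule $r_n$ applied at the inductive step. In the base case ($n = 0$), $M_0$ is the initial configuration of some well-typed closed process $\jtypem{\cdot}{c_1:A_1,\dotsc,c_m:A_m}{P}{c_0}{A_0}$, consisting of $\jproc{c_0}{P}$ together with the persistent judgments $\jctype{c_i}{A_i}$ for $0 \leq i \leq m$. Conditions (1)--(3) follow directly from well-formedness of the typing derivation (free channel names of a typed process coincide with its context channels plus its offered channel), and the $\jmsg{}{}$ conditions hold vacuously since $M_0$ contains no message judgments.

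The main technical ingredient for the inductive step is a standard channel renaming lemma: if $\jtypem{\cdot}{\Delta}{P}{c}{C}$ and $d$ is fresh for $\Delta$ and $c$, then $\jtypem{\cdot}{[d/a]\Delta}{[d/a]P}{[d/a]c}{C}$. This is routine by induction on the typing derivation. With it in hand, each rule case is checked by verifying every new judgment produced in $M_n$. The key observations are: (i) $\jctype{}{}$ judgments are persistent, so no type information is ever lost across a step; (ii) each rule that allocates a fresh channel $d$ simultaneously records $\jctype{d}{A}$ for the type $A$ determined by the typing of the originating process; and (iii) every new $\jproc{}{}$ judgment arises either by inversion on the typing rule for the active process or by combining inversion with the renaming lemma to handle the fresh continuation channel.

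As a representative case, rule \eqref{eq:main:1} produces $\jmsg{a}{\mSendC{a}{b}{d}}$, $\jproc{d}{[d/a]P}$, and $\jctype{d}{A}$ from $\jproc{a}{\tSendC{a}{b}{P}}$ together with the persistent $\jctype{a}{B \Tot A}$. Inversion on $\rn{\Tot R^*}$ plus the inductive hypothesis (condition (3)) yields $\jtypem{\cdot}{\Delta, b:B}{P}{a}{A}$ and hence $\jctype{b}{B} \in M_{n-1}$, and the renaming lemma applied to the fresh $d$ yields the required typing for $[d/a]P$; the newly-added $\jctype{d}{A}$ is exactly what the $\mSendC{}{}{}$ message invariant requires. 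The other send rules (\eqref{eq:obssem:1}, \eqref{eq:74}, \eqref{eq:main:3}, and the negative analogues) are similar, and the cut \eqref{eq:76}, forward \eqref{eq:obssem:5}, and recursion \eqref{eq:main:2} cases are straightforward.

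The main obstacle is the bookkeeping in the message-reception rules such as \eqref{eq:obssem:4}, \eqref{eq:obssem:2}, and \eqref{eq:main:4}, where channels carried by a message are substituted into the receiver's continuation. Here one must verify that the types recorded for the received payload and continuation channels (obtained by applying the inductive hypothesis to the $\jmsg{}{}$ judgment) match exactly those assumed by inversion on the corresponding left rule applied to the $\jproc{}{}$ judgment. This matching is guaranteed by the design of the SSOS: each send rule records payload and continuation types in $\jctype{}{}$ judgments whose shapes are tightly coupled to the polarity-tracked protocol state, so the types on the receiver's side are always forced to align with those produced on the sender's side.
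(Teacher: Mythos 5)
Your proposal matches the paper's proof, which is given only as the one-line remark that the result ``follows by an induction on $n$ and a case analysis on the rule used in the last step''; your base case, rule-by-rule case analysis, and auxiliary renaming lemma for fresh continuation channels are exactly the details one would supply to flesh that out. No gaps relative to the paper's own (terse) argument.
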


The MRS $\mc{P}$ differs from the usual MRSs given for this style session-typed languages~\cite{ gommerstadt_2018:_session_typed_concur_contr, pfenning_griffith_2015:_polar_subst_session_types, toninho_2013:_higher_order_proces_funct_session} in the addition of $\jctype{c}{A}$ judgments.
\Cref{prop:sess-type-proc:1} shows that their addition does not change the operational behaviour of the semantics.
Let $|M|$, $|\mc{P}|$, $|T|$, etc., be the result of erasing all $\jctype{c}{A}$ judgments.

\begin{corollary}
  \label[corollary]{prop:sess-type-proc:1}
  Consider a process $\jtypem{\cdot}{\Delta}{P}{c}{A}$ with initial state $M_0$.
  If $T$ is a trace from $M_0$ under $\mc{P}$, then $|T|$ is a trace from $|M_0|$ under $|\mc{P}|$.
  If $T$ is a trace from $|M_0|$ under $|\mc{P}|$, then there exists a trace $T'$ from $M_0$ under $\mc{P}$ such that $|T'| = T$.
\end{corollary}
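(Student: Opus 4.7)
The plan is to show that the erasure map $|\cdot|$ is a bijection between traces from $M_0$ under $\mc{P}$ and traces from $|M_0|$ under $|\mc{P}|$. Since $\jctype{c}{A}$ judgments are persistent and appear identically on both sides of each rule, the map $r \mapsto |r|$ is a bijection between $\mc{P}$ and $|\mc{P}|$ that preserves matching and fresh-constant substitutions.

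First I would handle the forward direction by induction on trace length. Each step $M_i \xrightarrow{(r;\delta)} M_{i+1}$ in $\mc{P}$ uses an active multiset $F_r(\theta) \subseteq M_i$; erasure commutes with multiset sum, so $F_{|r|}(\theta) \subseteq |M_i|$ and $|r|$ applies to $|M_i|$ with substitution $\delta$ to yield exactly $|M_{i+1}|$. The freshness conditions on the fresh-constant substitution transfer immediately, since $\mc{P}$ and $|\mc{P}|$, and likewise $M_i$ and $|M_i|$, contain the same constants.

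For the backward direction, given a trace $T = (|M_0|, (s_i;\delta_i)_{i \in I})$ under $|\mc{P}|$, I would inductively build $T' = (M_0, (r_i;\delta_i)_{i \in I})$ under $\mc{P}$, where $r_i$ is the unique rule with $|r_i| = s_i$, maintaining the invariant that the $i$-th multiset of $T'$ erases to the $i$-th multiset of $T$. The main obstacle is verifying that $r_i$ is applicable to $M_{i-1}$: the active multiset $F_{r_i}(\theta_i)$ may contain $\jctype{c}{A}$ judgments not present in $F_{s_i}(\theta_i)$, and these must be shown to lie in $M_{i-1}$. Inspecting the SSOS rules, each such $\jctype{c}{A}$ concerns the providing channel $c$ of a $\jproc{c}{P}$ judgment that itself appears in the active multiset. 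Applying \cref{prop:main:3} to the partial trace $T'$ constructed up through step $i-1$ guarantees the presence of this $\jctype{c}{A}$ in $M_{i-1}$, completing the inductive step.
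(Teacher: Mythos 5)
Your proposal is correct and takes essentially the same route the paper intends: the forward direction is immediate from erasure of the persistent $\jctype{c}{A}$ judgments, and the backward direction reconstructs the trace step by step, invoking \cref{prop:main:3} on the partially built trace to guarantee that the required $\jctype{c}{A}$ judgments are present so the un-erased rules remain applicable. One minor inaccuracy: in rule~\eqref{eq:sess-type-proc:1} the $\jctype{a}{\cdot}$ judgment in the active multiset concerns a channel \emph{used} by the process rather than its providing channel, but since \cref{prop:main:3} supplies $\jctype{c_i}{A_i}$ for \emph{every} free channel of the process (with the type pinned down uniquely by \cref{prop:main:2}), your inductive step still goes through.
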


\Cref{prop:main:3} showed that there were enough $\jctype{c}{A}$ judgments in a trace.
\Cref{prop:main:2} shows that there are not too many:

\begin{proposition}
  \label[proposition]{prop:main:2}
  Let $(M_0, (r_i;\delta_i)_i)$ be a process trace.
  For all channels $c$ and all $i, j \geq 0$, if $\jctype{c}{A_i}$ appears in $M_i$ and $\jctype{c}{A_j}$ appears in $M_j$, then $A_i = A_j$.
\end{proposition}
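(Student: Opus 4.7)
The plan is to strengthen the claim to a single-configuration invariant and then exploit persistence of the $\jctype{c}{A}$ judgment to lift it across the whole trace. Specifically, I will prove by induction on $n$ that for every process trace $(M_0,(r_i;\delta_i)_i)$ and every channel $c$, if $\jctype{c}{A}$ and $\jctype{c}{B}$ both appear in $M_n$, then $A = B$. The desired proposition then follows immediately: given $\jctype{c}{A_i} \in M_i$ and $\jctype{c}{A_j} \in M_j$ with (without loss of generality) $i \leq j$, the persistence clause in the definition of multiset rewrite rules---which requires persistent formulas occurring in the left-hand side to be included in the right-hand side---forces $\jctype{c}{A_i} \in M_j$, and the single-configuration invariant applied to $M_j$ yields $A_i = A_j$.

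For the induction itself, the base case is immediate from the shape of the initial configuration $\jproc{c_0}{P},\jctype{c_0}{A_0},\dotsc,\jctype{c_n}{A_n}$, since the $c_i$ are the pairwise distinct channels of the context $c_1:A_1,\dotsc,c_n:A_n$. The inductive step proceeds by case analysis on the rule $r_n$ applied in the step $M_{n-1} \to M_n$. Inspecting each rule in \cref{sec:statics-dynamics}, only \eqref{eq:76}, \eqref{eq:main:1}, \eqref{eq:74}, \eqref{eq:sess-type-proc:1}, and \eqref{eq:main:3} introduce a new $\jctype{d}{B}$ judgment on the right-hand side, and in every case $d$ is existentially bound. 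By the definition of rule application, the fresh-constant substitution $\xi_n$ maps $d$ to a constant not appearing in $M_{n-1}$ or in $\mc{R}$, so in particular no judgment of the form $\jctype{d}{B'}$ can already be present in $M_{n-1}$. Combined with the inductive hypothesis for all channels other than $d$, this preserves the invariant.

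The only potentially delicate point is ensuring that the freshness condition truly suffices to prevent collisions with existing type judgments. This is exactly the content of the trace-level freshness conditions stated after~\eqref{eq:obssem:7}---namely that each $\xi_i$ is one-to-one and that constants introduced by $\xi_j$ are disjoint from $M_i$ for all $i < j$---together with the stipulation that the result of applying a rule uses constants fresh with respect to the current multiset. I therefore expect no real obstacle: the argument is a straightforward induction whose essential content is that persistence plus freshness conspire to make channel-type assignments monotone and unambiguous.
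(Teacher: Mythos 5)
Your proof is correct and follows what is evidently the intended argument: the paper omits an explicit proof of \cref{prop:main:2}, but it sets up exactly the two mechanisms you invoke---$\jctype{c}{A}$ is declared persistent (so it propagates forward through every configuration by the inclusion requirement $\pi(\vec x) \subseteq G(\vec x, \vec n)$ on rewrite rules), and the freshness conditions on traces guarantee that every newly introduced $\jctype{d}{B}$ carries an existentially bound, hence globally fresh, channel name. Your reduction to a single-configuration invariant plus persistence, with the case enumeration of the five rules that emit new type judgments, is precisely the straightforward induction one would expect here; no gaps.
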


We show an analogous uniqueness result for $\jmsg{c}{m}$ judgments.
It implies that each channel name in an execution carries at most one message.
To prove it, we begin by partitioning a process's free channels into ``input'' and ``output'' channels and show that at all times, a channel is an output channel of at most one process.
Given a process $P$, let $\outc(P)$ be the subset of $\fcn(P)$ recursively defined by:
\begin{align*}
  \outc(\tFwdP{a}{b}) &= \{ b \} &  \outc(\tCut{a}{P}{Q}) &= (\outc(P) \cup \outc(Q)) \setminus \{ a \}\\
  \outc(\tClose{a}) &= \{ a \} & \outc(\tWait{a}{P}) &= \outc(P)\\
  \outc(\tSendL{a}{k}{P}) &= \{ a \} \cup \outc(P) & \outc(\tCase{a} (l \Rightarrow P_l)_{l \in L}) &= \left(\,\,\bigcup_{l \in L} {\outc(P_l)}\right) \setminus \{ a \}\\
  \outc(\tSendC{a}{b}{P}) &= \{ a \} \cup \outc(P) & \outc(\tRecvC{b}{a}{P}) &= \outc(P) \setminus \{ a, b \}\\
  \outc(\tSendU{a}{P}) &= \{ a \} \cup \outc(P) & \outc(\tRecvU{a}{P}) &= \outc(P) \setminus \{ a \}\\
  \outc(p) &= \emptyset & \outc(\tFix{p}{P}) &= \outc(P)
\end{align*}
Intuitively, $c \in \outc(P)$ if the next time $P$ communicates on $c$, $P$ sends a message on $c$.
Given a configuration $\mc{C}$, let $\outc(\mc{C})$ be the union of the sets $\outc(P)$ for $\jproc{c}{P}$ in $\mc{C}$.
Analogously, let $\inpc(P)$ and $\inpc(\mc{C})$ be the set of input channels of $P$ and of $\mc{C}$.

\begin{lemma}
  \label{lemma:main:5}
  If $F(\vec k) \xrightarrow{(r;(\vec k, \vec a))} G(\vec k, \vec a)$ by a rule $r$ of \cref{sec:statics-dynamics}, then
  \begin{itemize}
  \item if $\jmsg{c}{m} \in F(\vec k)$, then $c \in \inpc(F(\vec k))$;
  \item if $\jmsg{c}{m} \in G(\vec k, \vec a)$, then $c \in \outc(F(\vec k))$;
  \item if $\jmsg{c}{m; \tFwd{c}{d}} \in G(\vec k, \vec a)$, then $d \in \vec a$ and $d \in \fcn(G(\vec k, \vec a))$; and
  \item $\outc(G(\vec k, \vec a)) \subseteq \outc(F, \vec k) \cup \vec a$ and $\inpc(G(\vec k, \vec a)) \subseteq \inpc(F, \vec k) \cup \vec a$.
  \end{itemize}
\end{lemma}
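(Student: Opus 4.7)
The plan is to proceed by case analysis on the rule $r$ used in the step, checking each of the four claims for each rule of \cref{sec:statics-dynamics}. The rules split into natural groups: the forwarding rule \eqref{eq:obssem:5}, the cut rule \eqref{eq:76}, the recursion rule \eqref{eq:main:2}, and the send/receive pairs for each type constructor. In every case one can compute $\inpc$ and $\outc$ of the relevant processes directly from their defining equations, and the claims read off by inspection.

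For claims (1) and (2), I would observe that only the forwarding rule \eqref{eq:obssem:5} and the receive rules \eqref{eq:obssem:3}, \eqref{eq:obssem:4}, \eqref{eq:obssem:2}, \eqref{eq:sess-type-proc:2}, and \eqref{eq:main:4} have a $\jmsg{c}{m}$ judgment in their active multiset, and in each case the active multiset also contains a process that has $c$ as an input channel (either $\tFwdP{c}{b}$, for which $c$ is an input, or a receiving process blocked on $c$). Symmetrically, the judgments $\jmsg{c}{m}$ in $G(\vec k, \vec a)$ are produced only by the forwarding rule and the send rules \eqref{eq:obssem:1}, \eqref{eq:main:1}, \eqref{eq:74}, \eqref{eq:sess-type-proc:1}, and \eqref{eq:main:3}, and in each case $c$ is an output channel of the corresponding sending (or forwarding) process in $F(\vec k)$, as verified directly from the equations defining $\outc$.

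For claim (3), a message with a forwarding suffix $\tFwd{c}{d}$ is generated only by the send rules that allocate a fresh continuation channel, namely \eqref{eq:main:1}, \eqref{eq:74}, \eqref{eq:sess-type-proc:1}, and \eqref{eq:main:3}. In each such rule, $d$ is the freshly generated constant that constitutes $\vec a$, and $d$ simultaneously appears as the subject of the new process judgment $\jproc{d}{[d/a]P}$, so $d \in \vec a$ and $d \in \fcn(G(\vec k, \vec a))$ by inspection. The close rule \eqref{eq:obssem:1} produces a message without a forwarding suffix and is therefore vacuous here, and the forward rule \eqref{eq:obssem:5} simply relays an existing message and does not generate fresh constants.

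Claim (4) is where the main work lies. For each send rule, the sending process $P$ is replaced by $[d/a]P'$ for a subterm $P'$, so the defining equation for $\outc$ gives $\outc([d/a]P') \subseteq (\outc(P) \setminus \{a\}) \cup \{d\}$, and since the remaining processes are unchanged, the claimed inclusion follows. The receive rules and the forward rule are analogous. The recursion rule reduces to the observation that $\outc([\tFix{p}{P}/p]P) = \outc(\tFix{p}{P}) = \outc(P)$, since substituting a process for a process variable cannot introduce new free channels. I expect the main obstacle to be the cut rule \eqref{eq:76}: the spawned processes together contribute $\outc([b/a]P) \cup \outc([b/a]Q) \subseteq ((\outc(P) \cup \outc(Q)) \setminus \{a\}) \cup \{b\}$, and one must correlate this with $\outc(\tCut{a}{P}{Q}) = (\outc(P) \cup \outc(Q)) \setminus \{a\}$ and the fresh $b \in \vec a$. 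The accounting for $\inpc$ proceeds dually in each case.
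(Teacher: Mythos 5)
Your proposal is correct and matches the paper's approach exactly: the paper's proof is the one-line ``Immediate by a case analysis on the rules,'' and you have simply carried out that case analysis explicitly, grouping the rules and reading off the claims from the defining equations of $\outc$ and $\inpc$.
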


\begin{proof}
  Immediate by a case analysis on the rules.
\end{proof}

An induction with \cref{lemma:main:5} implies the desired disjointness result:

\begin{lemma}
  \label[lemma]{lemma:main:6}
  Let $(M_0, (r_i;\delta_i)_i)$ be a process trace.
  For all $n$, if $\jproc{c}{P}$ and $\jproc{d}{Q}$ appear in $M_n$, then $\outc(P) \cap \outc(Q) = \emptyset$ and $\inpc(P) \cap \inpc(Q) = \emptyset$.
\end{lemma}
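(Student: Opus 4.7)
The plan is to proceed by induction on $n$, with \cref{lemma:main:5} driving the inductive step by tracking how output- and input-channel sets propagate through each rewriting step. The base case $n = 0$ is immediate: the initial configuration of $\jtypem{\cdot}{\Delta}{P}{c}{A}$ contains exactly one process judgment, $\jproc{c}{P}$, so there are no distinct pairs to compare.

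For the inductive step, fix $M_n \xrightarrow{(r;(\theta,\xi))} M_{n+1}$ and case-analyze on $r$. All rules other than \rn{Cut} consume at most one process judgment $\jproc{c}{P}$ and produce at most one $\jproc{c'}{P'}$. The fourth bullet of \cref{lemma:main:5} gives $\outc(P') \subseteq \outc(P) \cup \vec{a}$ and $\inpc(P') \subseteq \inpc(P) \cup \vec{a}$, where $\vec{a}$ is the fresh-constant image of $\xi$. For any other $\jproc{d}{Q} \in M_{n+1}$ that is unchanged by the step, $\jproc{d}{Q} \in M_n$, so the induction hypothesis gives $\outc(P) \cap \outc(Q) = \emptyset$. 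The trace conditions force $\vec{a}$ to be disjoint from the constants of $M_n$, hence from $\outc(Q)$, so $\outc(P') \cap \outc(Q) = \emptyset$; symmetrically for inputs. The forwarding rule consumes a process judgment and produces only a message, so disjointness is trivially inherited.

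The \rn{Cut} rule is the critical case: $\jproc{c}{\tCut{a:A}{P_1}{P_2}}$ steps to $\jproc{b}{[b/a]P_1}, \jproc{c}{[b/a]P_2}$ with $b$ fresh. Disjointness between either new process and an unchanged $\jproc{d}{Q}$ reduces to the preceding argument via the inclusion $\outc([b/a]P_i) \subseteq (\outc(P_i) \setminus \{a\}) \cup \{b\} \subseteq \outc(\tCut{a:A}{P_1}{P_2}) \cup \{b\}$ together with freshness of $b$. For mutual disjointness of $[b/a]P_1$ and $[b/a]P_2$, I would invoke \cref{prop:main:3} to extract a typing derivation for $\tCut{a:A}{P_1}{P_2}$. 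Linearity of the premises gives $\fcn(P_1) \cap \fcn(P_2) \subseteq \{a\}$; the polarity of $A$ places $a$ in $\outc(P_1)$ and $\inpc(P_2)$ but not in $\outc(P_2)$ or $\inpc(P_1)$. Hence $\outc(P_1) \cap \outc(P_2) = \emptyset = \inpc(P_1) \cap \inpc(P_2)$, and applying the substitution $[b/a]$ preserves both disjointnesses since $b$ is fresh.

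The main obstacle I expect is the polarity-based auxiliary claim feeding into the cut case: that a channel typed positively on the right of $\vdash$ is an output, while one typed positively on the left is an input, and dually for negative types. Although intuitive from the polar typing discipline, making this precise requires a separate induction on the typing derivation, with particular attention to recursive types and to the negative type $\Tamp$ so that $\outc$ and $\inpc$ are seen to align correctly with polarity under each typing rule. Once that polarity lemma is in hand, the main induction reduces to uniform appeals to \cref{lemma:main:5} and to freshness, making the remaining case analysis routine.
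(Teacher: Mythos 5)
Your proposal is correct and follows essentially the same route as the paper, which disposes of this lemma with the single remark that it follows ``by induction with \cref{lemma:main:5}.'' Your fleshed-out treatment of the \rn{Cut} case --- using linearity of the typing derivation and the polarity of the cut type to get mutual disjointness of the two spawned processes, which \cref{lemma:main:5} alone does not deliver --- supplies detail the paper leaves implicit, and the auxiliary polarity claim you flag is indeed the only piece requiring its own induction on typing derivations.
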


The following lemma shows that processes do not send messages on channels $c$ already associated with a $\jmsg{c}{m}$ judgment:

\begin{lemma}
  \label[lemma]{lemma:main:7}
  Let $(M_0, (r_i;\delta_i)_i)$ be a process trace.
  For all $n \leq k$, if $\jmsg{c}{m} \in M_n$ and $\jproc{d}{P} \in M_k$, then $c \notin \outc(P)$.
\end{lemma}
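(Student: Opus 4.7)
My plan is to prove the strengthened invariant that for every $k$, if $\jmsg{c}{m'} \in M_{k'}$ for some $k' \leq k$, then $c \notin \outc(P)$ for every $\jproc{d}{P} \in M_k$; the original statement follows by taking $k' = n$. I will proceed by induction on $k$. The base case $k = 0$ is vacuous, because an initial configuration contains only $\jproc{c_0}{P}$ and $\jctype{c_i}{A_i}$ judgments and hence no $\jmsg{c}{\cdot}$.

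For the inductive step, fix the transition $M_k \xrightarrow{(r;\delta)} M_{k+1}$ with active multiset $F$, generated multiset $G$, and fresh constants $\vec a$. Given a candidate channel $c$ with $\jmsg{c}{\cdot} \in M_{k'}$ for some $k' \leq k + 1$, I split on whether $c$ is \emph{old} (some such $k'$ lies in $[0,k]$) or \emph{freshly messaged} (its earliest appearance in a $\jmsg{c}{\cdot}$ judgment is at $M_{k+1}$). In the old case, the induction hypothesis yields $c \notin \outc(M_k)$, hence $c \notin \outc(F)$, and the freshness side condition on traces forbids $c \in \vec a$ because $c$ already appears in $M_k$. \Cref{lemma:main:5} then gives $\outc(G) \subseteq \outc(F) \cup \vec a$, so $c \notin \outc(G)$, and combined with $c \notin \outc(M_k \setminus F)$ this yields $c \notin \outc(M_{k+1})$.

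In the freshly-messaged case, $\jmsg{c}{m} \in G$; \cref{lemma:main:5} gives $c \in \outc(F)$, and \cref{lemma:main:6} identifies a unique $\jproc{d_0}{P_0} \in M_k$ with $c \in \outc(P_0)$, which lies in $F$ and is therefore consumed by the step, so no stationary process in $M_k \setminus F$ contributes $c$ to $\outc(M_{k+1})$. The main obstacle is the remaining possibility that a freshly generated process in $G$ has $c$ among its outputs, since \cref{lemma:main:5} alone yields $\outc(G) \subseteq \outc(F) \cup \vec a$ and permits $c \in \outc(F)$. I will dispatch this by a short case analysis on those rules of \cref{sec:statics-dynamics} whose right-hand side contains a $\jmsg{c}{\cdot}$: rules \eqref{eq:obssem:1} and \eqref{eq:obssem:5} generate no processes, while \eqref{eq:main:1}, \eqref{eq:74}, \eqref{eq:main:3}, and \eqref{eq:sess-type-proc:1} each produce a continuation of the form $\jproc{\cdot}{[d/a]P'}$ in which the fresh channel $d$ has been substituted for the message channel $a = c$, so $c \notin \fcn([d/a]P') \supseteq \outc([d/a]P')$. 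This closes the induction.
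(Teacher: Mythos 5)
Your proof is correct and follows the route the paper intends: induction along the trace, using \cref{lemma:main:5} to bound $\outc(G)$, \cref{lemma:main:6} to isolate the unique sender, the trace freshness condition to exclude $c \in \vec a$, and a case analysis on the message-producing rules to see that the substitution $[d/a]$ removes the messaged channel from the continuation's free (hence output) channels. The only nitpick is in the ``old'' case: the disjointness condition should be invoked at index $k'$ (where $\jmsg{c}{m'}$ actually occurs), since the message need not persist to $M_k$; the conclusion $c \notin \vec a$ is unaffected.
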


The desired result then follows by induction and the above results:

\begin{corollary}
  \label[corollary]{cor:main:1}
  Let $(M_0, (r_i;\delta_i)_i)$ be a process trace.
  For all channels $c$ and all $i, j \geq 0$, if $\jmsg{c}{m_i}$ appears in $M_i$ and $\jmsg{c}{m_j}$ appears in $M_j$, then $m_i = m_j$.
\end{corollary}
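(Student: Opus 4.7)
The plan is to combine Lemma 5 with Lemma 7 to establish uniqueness of the message on each channel. Lemma 5 tells us that every rule whose right-hand side contains $\jmsg{c}{m}$ has $c \in \outc(F(\vec k))$ for its active multiset $F(\vec k)$; unfolding the definition of $\outc$ on configurations as the union of $\outc(P)$ over process judgments in the multiset yields some $\jproc{a}{P}$ in $F(\vec k)$ with $c \in \outc(P)$. Lemma 7 then rules out the coexistence of such a process with any pre-existing message on $c$.

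First I would establish the ``at most one message per channel per configuration'' invariant: for every $i$ and every channel $c$, $M_i$ contains at most one judgment of the form $\jmsg{c}{\cdot}$. The proof is by induction on $i$, with vacuous base case since initial configurations contain no message judgments. For the inductive step, a newly created $\jmsg{c}{m}$ at step $i+1$ would coexist with an earlier $\jmsg{c}{m'} \in M_i$ only if the active multiset of that step contained some $\jproc{a}{P}$ with $c \in \outc(P)$ (by Lemma 5); but this directly contradicts Lemma 7 applied with $n = k = i$.

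It remains to treat $i < j$ (taking, WLOG, $i \leq j$). The inductive argument above in fact shows more: no rule applied after step $i$ can create any $\jmsg{c}{\cdot}$ judgment, because each such rule requires a process whose output set contains $c$, and this is forbidden by Lemma 7 applied to $\jmsg{c}{m_i} \in M_i$. Hence the judgment $\jmsg{c}{m_j}$ witnessing the hypothesis in $M_j$ must be literally the same judgment as $\jmsg{c}{m_i}$, having persisted from step $i$ to step $j$, so $m_i = m_j$. The only delicate point is reconciling the two uses of $\outc$ (on processes versus on configurations) and verifying that Lemma 5's second clause really does hand us a specific process with $c$ in its output set; this bookkeeping is the sole obstacle and is routine.
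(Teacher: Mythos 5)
Your proposal is correct and follows essentially the same route the paper indicates for this corollary: an induction over the trace combining \cref{lemma:main:5} (a rule emitting $\jmsg{c}{m}$ needs a process with $c$ in its output set) with \cref{lemma:main:7} (no such process coexists with, or follows, an existing message on $c$), so that each channel admits at most one send event in the entire trace. The ``bookkeeping'' point you flag — extracting a specific $\jproc{a}{P}$ with $c \in \outc(P)$ from $c \in \outc(F(\vec k))$ — is indeed immediate from the definition of $\outc$ on configurations as a union over process judgments.
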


We now turn our attention to showing that all well-typed, closed processes have fair executions.
This fact will follow easily from the following proposition:

\begin{proposition}
  \label[proposition]{prop:main:8}
  The MRS $\mc{P}$ is non-overlapping from the initial configuration of $\jtypem{\cdot}{\Delta}{P}{c}{A}$ for all $\jtypem{\cdot}{\Delta}{P}{c}{A}$.
\end{proposition}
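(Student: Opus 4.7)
The plan is to show that at every configuration $M_n$ reachable from the initial configuration of $\jtypem{\cdot}{\Delta}{P}{c}{A}$, and for any distinct applicable instantiations $r_1(\theta_1), \dotsc, r_k(\theta_k)$, the overlap $\olap{M_n}{F_1, \dotsc, F_k}$ consists only of persistent $\jctype{c}{B}$ judgments. Since, by convention, persistent judgments appearing on the left of a rule also appear on the right, the required inclusion $F_i \cap \olap{M_n}{F_1, \dotsc, F_k} \subseteq G_i$ follows immediately from such an overlap analysis. It therefore suffices to show that no ephemeral judgment ($\jproc$ or $\jmsg$) appears in the active multisets of two distinct applicable instantiations, since a judgment lies in the overlap only when its total multiplicity across the $F_i$ strictly exceeds its multiplicity in $M_n$.

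For $\jproc{c}{Q}$, I would inspect each rule in \cref{sec:statics-dynamics} to observe that (i) each rule has exactly one $\jproc$ in its active multiset, and (ii) the outermost syntactic form of $Q$ determines at most one rule that can treat $\jproc{c}{Q}$ as its principal process (e.g., $\tClose{c}$ matches only $\rn{$\Tu R$}$, $\tFix{p}{R}$ only \rn{Rec}, $\tSendU{c}{R}$ only $\rn{$\rho^+R$}$, and so on). The matching substitution is then forced by the free channel names and, for rules with persistent $\jctype{c}{B}$ hypotheses, by the unique type of $c$ in $M_n$ (\cref{prop:main:2}). Hence any $\jproc{c}{Q} \in M_n$ belongs to at most one $F_i$ and cannot contribute to the overlap.

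For $\jmsg{c}{m}$, \cref{cor:main:1} ensures that $M_n$ contains at most one message judgment for each channel $c$, so a $\jmsg{c}{m}$ could only belong to two $F_i$'s if two distinct rules both consumed it. But every rule that consumes $\jmsg{c}{m}$ also requires a listening process $\jproc{d}{R}$ with $c \in \inpc(R)$, and whose outermost form is determined by the shape of $m$ (for instance, $\jmsg{c}{\ast}$ pairs only with $\tWait{c}{\cdot}$ or $\tFwdP{c}{\cdot}$). By \cref{lemma:main:6}, at most one process in $M_n$ has $c$ as an input channel, so the pair $(\jmsg{c}{m}, \jproc{d}{R})$ is uniquely determined, and $\jmsg{c}{m}$ lies in at most one $F_i$.

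The only remaining judgments in any $F_i$ are the persistent $\jctype{c}{B}$ judgments, which are harmless as noted above. The main obstacle is just the case-by-case bookkeeping across all the rules of \cref{sec:statics-dynamics}, in particular verifying for the \rn{Fwd${}^+$} rule and the various receive rules that the input-channel side conditions line up with \cref{lemma:main:6}; past this, the result reduces to a direct appeal to the uniqueness invariants (\cref{prop:main:2}, \cref{cor:main:1}, \cref{lemma:main:6}) already established for process traces.
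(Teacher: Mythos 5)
Your proof is correct and follows essentially the same route as the paper's: a case analysis on which judgments two distinct applicable instantiations could share in their active multisets, discharged using the uniqueness invariants \cref{cor:main:1} and \cref{lemma:main:6}. The only divergence is minor — the paper additionally shows that the persistent $\jctype{c}{B}$ judgments cannot be shared either, so the overlap is empty outright, whereas you dispense with that case by noting that persistent formulas in $F$ must reappear in $G$, which is a slightly more economical but equally valid way to close the argument.
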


\begin{proof}
  Consider a trace $(M_0, (r_i;(\theta_i,\xi_i)))$ from the initial configuration of $\jtypem{\cdot}{\Delta}{P}{c}{A}$ and some arbitrary $n$.
  It is sufficient to show that if $s_1(\phi_1)$ and $s_2(\phi_2)$ are distinct instantiations applicable to $M_n$, then $F_1(\phi_1)$ and $F_2(\phi_2)$ are disjoint multisets: $F_1(\phi_1) \cap F_2(\phi_2) = \emptyset$.
  Indeed, if this is the case and $s_1(\phi_1), \dotsc, s_k(\phi_k)$ are the distinct rule instantiations applications to $M_n$, then $F_1(\theta_1),\dotsc,F_k(\phi_k) \subseteq M_n$, so $\olap{M_n}{F_1(\phi_1),\dotsc,F_k(\phi_k)} = \emptyset$.

  We proceed by case analysis on the possible judgments in $F_1(\phi_1) \cap F_2(\phi_2)$.
  \begin{description}
  \item[Case $\jmsg{c}{m}$.] Then $c \in \inpc(F_1(\phi_1))$ and $c \in \inpc(F_2(\phi_2))$ by \cref{lemma:main:5}.
    This is a contradiction by \cref{lemma:main:6}.
  \item[Case $\jproc{c}{P}$.] Then $s_1 = s_2$ by a case analysis on the rules.
    We show that $\phi_1 = \phi_2$.
    If $s_1$ is one of \labelcref{eq:obssem:1,eq:obssem:3,eq:obssem:5,eq:74,eq:76,eq:main:1,eq:main:3,eq:main:2,eq:sess-type-proc:1}, then we have $\phi_1 = \phi_2$, because all constants matched by $\phi_1$ and $\phi_2$ appear in $\jproc{c}{P}$.
    If $s_1$ is one of \labelcref{eq:obssem:2,eq:obssem:4,eq:main:4,eq:sess-type-proc:2}, then $F_i(\phi_i)$ contain a judgment $\jmsg{d}{m_i}$ where there is a constant $e_i \in m_i$ that appears in $\phi_i$, but not in $\jproc{c}{P}$ (explicitly, $e_i$ is the name of the continuation channel).
    By \cref{cor:main:1}, $m_1 = m_2$, so $e_1 = e_2$.
    All other channel names in $\phi_i$ appear in $\jproc{c}{P}$, so $\phi_1 = \phi_2$.
    So $s_1(\phi_1)$ and $s_2(\phi_2)$ are not distinct rule instantiations, a contradiction.
  \item[Case $\jctype{c}{A}$.] By case analysis on the rules, $s_1 = s_2$ and there exist judgments $\jproc{d_i}{P_i} \in F_i(\phi_i)$.
    Suppose to the contrary that $P_1 \neq P_2$.
    By case analysis on the rules, $s_1$ is one of \labelcref{eq:74,eq:main:1,eq:sess-type-proc:1,eq:main:3}.
    This implies that $c \in \outc(P_1) \cap \outc(P_2)$, a contradiction of \cref{lemma:main:6}.
    So $P_1 = P_2$.
    Because all constants in $\phi_1$ and $\phi_2$ appear in $P_1$, we conclude that $\phi_1 = \phi_2$.
    So $s_1(\phi_1)$ and $s_2(\phi_2)$ are not distinct rule instantiations, a contradiction.\qedhere
  \end{description}
\end{proof}

\begin{corollary}
  \label[corollary]{cor:main:3}
  Every process $\jtypem{\cdot}{\Delta}{P}{c}{A}$ has a fair execution.
  Its fair executions are all permutations of each other and they are all union-equivalent.
\end{corollary}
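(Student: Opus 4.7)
The plan is to chain together the general results from \cref{sec:fair-mult-rewr} with the structural result about the session-typed MRS given by \cref{prop:main:8}. The initial configuration $M_0$ of $\jtypem{\cdot}{\Delta}{P}{c}{A}$ is the starting point, and everything else follows by lifting the pointwise ``non-overlapping'' property along traces and then invoking the fair-scheduler, permutation, and union-equivalence theorems in turn.

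First, I would observe that \cref{prop:main:8} gives us non-overlapping \textit{from} $M_0$, meaning that for every trace $(M_0,(r_i;\delta_i)_{i \in I})$ and every $i$, the MRS $\mc{P}$ is non-overlapping on $M_i$. Combined with \cref{prop:main:13} applied pointwise, this upgrades to: $\mc{P}$ commutes on each $M_i$, i.e., $\mc{P}$ is interference-free from $M_0$. A small technicality here is checking that the set of distinct applicable instantiations on each reachable $M_i$ is finite so that interference-freedom is even defined; this is immediate because $M_i$ is finite (each rule in \cref{sec:statics-dynamics} only adds finitely many judgments to an already finite multiset) and each $\jproc{c}{P}$ or $\jmsg{c}{m}$ judgment enables at most one rule instantiation up to equivalence, as can be read off the rules.

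Next, I would appeal to \cref{prop:main:9} (Fair Scheduler) to obtain a fair execution from $M_0$, which establishes the existence part of the corollary. For the second part, \cref{prop:main:1} gives that any two fair executions from $M_0$ are permutations of each other. Finally, \cref{prop:obssem:11} gives union-equivalence of all fair executions from $M_0$, which completes the corollary.

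There is no real obstacle: the work has all been done in the preceding propositions, and \cref{prop:main:8} is precisely the hypothesis each of them requires. The only thing to be mindful of is the distinction between ``on $M$'' and ``from $M$''; making the latter explicit (by quantifying over traces, as in the definition preceding \cref{prop:fair-mult-rewr:3}) is what licenses applying \cref{prop:main:9,prop:main:1,prop:obssem:11} here.
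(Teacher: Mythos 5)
Your proposal is correct and follows essentially the same route as the paper: \cref{prop:main:8} gives non-overlapping from the initial configuration, \cref{prop:main:13} upgrades this to interference-freedom, and then \cref{prop:main:9}, \cref{prop:main:1}, and \cref{prop:obssem:11} deliver existence, permutation-equivalence, and union-equivalence respectively. The extra remark about finiteness of the set of applicable instantiations is a reasonable bit of added care that the paper leaves implicit.
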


\begin{proof}
  By \cref{prop:main:8}, $\mc{P}$ is non-overlapping from the initial configuration $M_0$ of $\jtypem{\cdot}{\Delta}{P}{c}{A}$.
  It is then interference-free from $M_0$ by \cref{prop:main:13}, so a fair execution exists by \cref{prop:main:9}.
  All of its fair executions are permutations of each other by \cref{prop:main:1}.
  They are union-equivalent by \cref{prop:obssem:11}.
\end{proof}

\section{Observed Communications}
\label{sec:observ-comm}

Consider a closed process $\jtypem{\cdot}{c_1 : A_1,\dotsc,c_n : A_n}{P}{c_0}{A_0}$.
In this section, we will define the observation of $P$ to be a tuple $(c_i : v_i)_{0 \leq i \leq n}$, where $v_i$ is the communication of type $A_i$ observed on channel $c_i$ in a fair execution of $P$.
We extract communications from fair executions using a coinductively defined judgment.
We colour-code the modes of judgments, where \inm{inputs} to a judgment are in blue and \outm{outputs} are in red.

We begin by defining session-typed communications.
Let a communication $v$ be a (potentially infinite) tree generated by the following grammar, where $k$ and $l_i$ range over labels.
We explain these communications $v$ below when we associate them with session types.
For convenience, we also give a grammar generating the session types $A$ of \cref{sec:statics-dynamics}.
Session types are always finite expressions, and we treat $\Trec{\alpha}{A}$ as a binding operator.
\begin{align*}
  v,v' &\coloneqq \bot_A \mid \ast \mid (k, v) \mid (v, v') \mid (\unfold, v)\\
  A,A_i,B &\coloneqq \alpha \mid \Tu \mid A \Tot B \mid \Tplus ( l_1 : A_1, \dotsc, l_n : A_n ) \mid \Tamp ( l_1 : A_1, \dotsc, l_n : A_n ) \mid \Trec{\alpha}{A}.
\end{align*}
As in \cref{sec:statics-dynamics}, we abbreviate $\Tplus ( l_1 : A_1, \dotsc, l_n : A_n )$ and $\Tamp ( l_1 : A_1, \dotsc, l_n : A_n )$ by $\Tplus \{ l : A_l \}_{l \in L}$ and $\Tamp \{ l : A_l \}_{l \in L}$, respectively, where $L$ is the finite set of labels.

Next, we associate communications with session types.
The judgment $\jsynt{v}{A}$ means that the syntactic communication $v$ has type $A$.
It is coinductively defined by the following rules, where $A$ is assumed to have no unbound occurrences of $\alpha$.
The rule forming $\jsynt{(k,v_k)}{\Tplus \{ l : A_l \}_{l \in L}}$ has the side condition $k \in L$.
\[
  \infer{\jsynt{\bot_\Tu}{\Tu}}{}
  \qquad
  \infer{\jsynt{\ast}{\Tu}}{}
  \qquad
  \infer{\jsynt{\bot_{A \Tot B}}{A \Tot B}}{}
  \qquad
  \infer{
    \jsynt{(v,v')}{A \Tot B}
  }{
    \jsynt{v}{A}
    &
    \jsynt{v'}{B}
  }
  \qquad
  \infer{
    \jsynt{\bot_{\Trec{\alpha}{A}}}{\Trec{\alpha}{A}}
  }{}
  \qquad
  \infer{
    \jsynt{(\unfold, v)}{\Trec{\alpha}{A}}
  }{
    \jsynt{v}{[\Trec{\alpha}{A}/\alpha]A}
  }
\]
\[
  \infer{
    \jsynt{\bot_{\Tplus \{ l : A_l \}_{l \in L}}}{\Tplus \{ l : A_l \}_{l \in L}}
  }{}
  \quad
  \infer{
    \jsynt{(k,v_k)}{\Tplus \{ l : A_l \}_{l \in L}}
  }{
    \jsynt{v_k}{A_k}
  }
  \quad
  \infer{
    \jsynt{\bot_{\Tamp \{ l : A_l \}_{l \in L}}}{\Tamp \{ l : A_l \}_{l \in L}}
  }{}
  \quad
  \infer{
    \jsynt{(k,v_k)}{\Tamp \{ l : A_l \}_{l \in L}}
  }{
    \jsynt{v_k}{A_k}
  }
\]
Every closed session type $A$ has an empty communication $\bot_A$ representing the absence of communication of that type.
The communication $\ast$ represents the close message.
A communication of type $\Tplus \{ l : A_l \}_{l \in L}$ or $\Tplus \{ l : A_l \}_{l \in L}$ is a label $k \in L$ followed by a communication $v_k$ of type $A_k$, whence the communication $(k, v_k)$.
Though by itself the communication $(k, v_k)$ does not capture the direction in which the label $k$ travelled, this poses no problem to our development: we never consider communications without an associated session type, and the polarity of the type specifies the direction in which $k$ travels.
We cannot directly observe channels, but we can observe communications over channels.
Consequently, we observe a communication of type $A \Tot B$ as a pair $(v, v')$ of communications $v$ of type $A$ and $v'$ of type $B$.
A communication of type $\Trec{\alpha}{A}$ is an unfold message followed by a communication of type $[\Trec{\alpha}{A}/\alpha]A$.

Given a trace $T = (M_0, (r_i;(\theta_i,\xi_i))_{i})$, we write $\mc{T}$ for the set-theoretic union of the $M_i$, that is, $x \in \mc{T}$ if and only if $x \in \supp(M_i)$ for some $i$.
Write $\jttp{T}{c}{A}$ if $\jctype{c}{A} \in \mc{T}$.
This judgment is defined on all channel names $c$ that appear in $T$ by \cref{prop:main:3} and it is a function by \cref{prop:main:2}.

Assuming the channel $c$ appears in $T$, the judgment $\jtoc{T}{v}{c}{A}$ means that we observed a communication $v$ of type $A$ on the channel $c$ during $T$.
We will show below that whenever $\jtoc{T}{v}{c}{A}$, we also have $\jttp{T}{c}{A}$ and $\jsynt{v}{A}$.
Fixing $T$, the judgment $\jtoc{T}{v}{c}{A}$ is coinductively defined by the following rules, \ie, $\jtoc{T}{v}{c}{A}$ is the largest set of triples $(v,c,A)$ closed under the following rules.

We observe no communications on a channel $c$ if and only if $\jmsg{c}{m}$ does not appear in the trace for any $m$.
Subject to the side condition that for all $m$, $\jmsg{c}{m} \notin \mc{T}$, we have the rule
\[
  \infer[\rn{O-$\bot$}]{
    \jtoc{T}{\bot_A}{c}{A}
  }{
    \jttp{T}{c}{A}
  }
\]
We observe a close message on $c$ if and only if the close message was sent on $c$:
\[
  \infer[\rn{O-$\Tu$}]{
    \jtoc{T}{\ast}{c}{\Tu}
  }{
    \jmsg{c}{\ast} \in \mc{T}
  }
\]
We observe label transmission as labelling communications on the continuation channel.
We rely on the judgment $\jttp{T}{c}{\Tplus \{ l : A_l \}_{l \in L}}$ or $\jttp{T}{c}{\Tamp \{ l : A_l \}_{l \in L}}$ to determine the type of $c$:
\[
  \infer[\rn{O-$\Tplus$}]{
    \jtoc{T}{(l,v)}{c}{\Tplus \{ l : A_l \}_{l \in L}}
  }{
    \jmsg{c}{\mSendL{c}{l}{d}} \in \mc{T}
    &
    \jtoc{T}{v}{d}{A_l}
    &
    \jttp{T}{c}{\Tplus \{ l : A_l \}_{l \in L}}
  }
\]
\[
  \infer[\rn{O-$\Tamp$}]{
    \jtoc{T}{(l,v)}{c}{\Tamp \{ l : A_l \}_{l \in L}}
  }{
    \jmsg{c}{\mSendL{c}{l}{d}} \in \mc{T}
    &
    \jtoc{T}{v}{d}{A_l}
    &
    \jttp{T}{c}{\Tamp \{ l : A_l \}_{l \in L}}
  }
\]
As described above, we observe channel transmission as pairing of communications:
\[
  \infer[\rn{O-$\Tot$}]{
    \jtoc{T}{(u,v)}{c}{A \Tot B}
  }{
    \jmsg{c}{\mSendC{c}{a}{d}} \in \mc{T}
    &
    \jtoc{T}{u}{a}{A}
    &
    \jtoc{T}{v}{d}{B}
  }
\]
Finally, we observe the unfold message as an unfold message:
\[
  \infer[\rn{O-$\rho$}]{
    \jtoc{T}{(\unfold, v)}{c}{\Trec{\alpha}{A}}
  }{
    \jmsg{c}{\mSendU{c}{d}} \in \mc{T}
    &
    \jtoc{T}{v}{d}{[\Trec{\alpha}{A}/\alpha]A}
  }
\]

The following three propositions imply that for any $T$, $\jtoc{T}{v}{c}{A}$ is a total function from channel names $c$ in $T$ to session-typed communications $\jsynt{v}{A}$.

\begin{proposition}
  \label[proposition]{prop:main:12}
  If $\jtoc{T}{v}{c}{A}$, then $\jsynt{v}{A}$.
\end{proposition}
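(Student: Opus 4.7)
The plan is to prove this by coinduction, exploiting the fact that both $\jsynt{v}{A}$ and $\jtoc{T}{v}{c}{A}$ are coinductively defined. Concretely, I would fix the trace $T$, define the relation
\[
  R = \{ (v, A) \mid \exists c.\ \jtoc{T}{v}{c}{A} \},
\]
and show that $R$ is consistent with the defining clauses of $\jsynt{-}{-}$, i.e., is a post-fixed point of the monotone operator whose greatest fixed point is $\jsynt{-}{-}$. By the coinduction principle, this yields $R \subseteq \{(v,A) \mid \jsynt{v}{A}\}$, which is exactly the statement.

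To verify that $R$ is consistent, I would argue by case analysis on the rule used to derive $\jtoc{T}{v}{c}{A}$. In the \rn{O-$\bot$} case, $v = \bot_A$ and we need $\jsynt{\bot_A}{A}$, which is available as an axiom for each of the closed type formers $\Tu$, $A' \Tot B$, $\Tplus \{l : A_l\}_{l \in L}$, $\Tamp \{l : A_l\}_{l \in L}$, and $\Trec{\alpha}{A'}$; since $\jttp{T}{c}{A}$ holds and $A$ is a closed session type, it falls into one of these forms. In the \rn{O-$\Tu$} case, $v = \ast$ and $A = \Tu$, and $\jsynt{\ast}{\Tu}$ is an axiom. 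In each of the remaining cases (\rn{O-$\Tplus$}, \rn{O-$\Tamp$}, \rn{O-$\Tot$}, \rn{O-$\rho$}), the shape of $v$ and $A$ exactly matches one of the coinductive clauses for $\jsynt$, and the required premises are precisely the coinductive hypotheses that the sub-observations $\jtoc{T}{v'}{d}{A'}$ are related by $R$, which is immediate from the definition of $R$.

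I do not anticipate a real obstacle here; the match-up between the operational observation rules and the typing rules for communications is by design. The only mild care needed is in the \rn{O-$\bot$} case, where the type $A$ is not determined by the shape of $v = \bot_A$ alone but is read off from $\jttp{T}{c}{A}$, and one must confirm that a $\bot_A$ axiom is provided by the definition of $\jsynt$ for every closed session type $A$. Once those axioms are in hand, the proof is a routine mechanical check, one case per observation rule.
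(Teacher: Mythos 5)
Your proof is correct and is exactly the ``rule coinduction'' that the paper invokes in its one-line proof: you exhibit the relation $R$ induced by $\jtoc{T}{v}{c}{A}$ as a post-fixed point of the operator defining $\jsynt{v}{A}$, checking one case per observation rule. The side remark about the \rn{O-$\bot$} case is well taken, and the needed $\bot_A$ axioms are indeed supplied for every closed session type former, so nothing is missing.
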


\begin{proof}
  Immediate by rule coinduction.
\end{proof}

\begin{proposition}
  \label[proposition]{prop:main:11}
  If $T$ is a process trace, then for all $c$, if $\jttp{T}{c}{A}$, then $\jtoc{T}{v}{c}{A}$ for some $v$.
\end{proposition}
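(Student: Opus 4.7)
The plan is to prove this by rule coinduction. The target predicate $\exists v.\ \jtoc{T}{v}{c}{A}$ is a property of the triple $(c, A)$; to establish it on every $(c, A)$ with $\jttp{T}{c}{A}$, I will exhibit a relation $R$ of triples $(v, c, A)$ that is closed under the observation rules and such that every $c$ with $\jttp{T}{c}{A}$ appears as the channel component of some element of $R$. Since the observation judgment is the greatest set closed under its rules, $R \subseteq \{(v,c,A) : \jtoc{T}{v}{c}{A}\}$ follows immediately.

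To build $R$, I will corecursively define, for every $(c, A)$ with $\jttp{T}{c}{A}$, a communication $\hat v(c, A)$, and then take $R = \{(\hat v(c,A), c, A) : \jttp{T}{c}{A}\}$. The definition of $\hat v(c, A)$ proceeds by case analysis. By \cref{cor:main:1}, at most one $m$ satisfies $\jmsg{c}{m} \in \mc{T}$, and by \cref{prop:main:3}, the shape of $m$ is compatible with $A$ (and the type of any continuation channel $d$ is again recorded in $\mc T$, so the recursive call $\hat v(d, \cdot)$ is again legal by \cref{prop:main:2}). Specifically: set $\hat v(c, A) = \bot_A$ when no message on $c$ occurs in $\mc T$; $\hat v(c, \Tu) = \ast$ when $\jmsg{c}{\ast} \in \mc T$; $\hat v(c, A) = (l, \hat v(d, A_l))$ when $\jmsg{c}{\mSendL{c}{l}{d}} \in \mc T$ (with $A$ of the form $\Tplus\{l:A_l\}_{l\in L}$ or $\Tamp\{l:A_l\}_{l\in L}$); $\hat v(c, A\Tot B) = (\hat v(a, A), \hat v(d, B))$ when $\jmsg{c}{\mSendC{c}{a}{d}} \in \mc T$; and $\hat v(c, \Trec{\alpha}{A'}) = (\unfold, \hat v(d, [\Trec{\alpha}{A'}/\alpha]A'))$ when $\jmsg{c}{\mSendU{c}{d}} \in \mc T$. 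Each clause emits one node of the output tree before recursing, so the definition is productive and determines a unique element of the coinductively defined set of communications.

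The verification that $R$ is closed under the observation rules is then a direct case analysis on the clause used to define $\hat v(c, A)$: each clause matches exactly one observation rule, and the premises of that rule hold either by choice of $m$ in $\mc T$, by the auxiliary typing facts, or because $R$ contains the triples for the continuation channels by construction. Invoking the coinduction principle then yields $\jtoc{T}{\hat v(c,A)}{c}{A}$, as required.

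The main obstacle is justifying that the corecursive definition of $\hat v$ is legitimate in the presence of possibly infinite trees (e.g.\ the $\ms{conat}$ example of \cref{conj:sess-type-proc:1}). This rests entirely on three facts already in hand: uniqueness of the message carried by any channel (\cref{cor:main:1}), the match between the shape of that message and the recorded channel type (\cref{prop:main:3}), and uniqueness of the recorded type itself (\cref{prop:main:2}). Once these are invoked, the construction commits one communication-constructor at each unfolding and the coinductive argument goes through without further difficulty.
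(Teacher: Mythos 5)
Your proposal is correct and takes essentially the same approach as the paper: both hinge on corecursively constructing the witness $v$ by case analysis on the unique message judgment $\jmsg{c}{m} \in \mc{T}$ (justified by \cref{cor:main:1,prop:main:3,prop:main:2}) and then invoking coinduction. The only difference is bookkeeping in how the greatest fixed point is reached---you exhibit your relation $R$ directly as a post-fixed point of the rule functional, whereas the paper first uses cocontinuity to write $\gfp(\Phi) = \bigcap_{n \geq 0} \Phi^n(S)$ and checks that the constructed $v$ survives every finite iterate; the two are interchangeable here.
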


\begin{proof}[Proof (Sketch)]
  Let $S$ be the set of all triples $(v, A, c)$ for session-typed communications $\jsynt{v}{A}$ and channel names $c$.
  Let $\Phi : \wp(S) \to \wp(S)$ be the rule functional defining $\jtoc{T}{v}{c}{A}$.
  Then the judgment $\jtoc{T}{v}{c}{A}$ is given by the greatest fixed point $\gfp(\Phi)$ of $\Phi$ on the complete lattice $\wp(S)$, where $\jtoc{T}{v}{c}{A}$ if and only if $(v, A, c) \in \gfp(\Phi)$.
  The functional $\Phi$ is cocontinuous by \cite[Theorem~2.9.4]{sangiorgi_2012:_introd_bisim_coind}, so $\gfp(\Phi) = \bigcap_{n \geq 0} \Phi^n(S)$ by \cite[Theorem~2.8.5]{sangiorgi_2012:_introd_bisim_coind}.
  It is sufficient to show that if $\jttp{T}{c}{A}$, then there exists a $v$ such that $(c,v,A) \in \Phi^n(S)$ for all $n$.
  This $v$ can be constructed using a coinductive argument and a case analysis on $\jmsg{c}{m} \in \mc{T}$.
\end{proof}

\begin{proposition}
  \label[proposition]{prop:obssem:4}
  If $T$ is a trace from the initial configuration of a process, then for all $c$, if $\jtoc{T}{v}{c}{A}$ and $\jtoc{T}{w}{c}{B}$, then $v = w$ and $A = B$.
\end{proposition}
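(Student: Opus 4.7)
The plan is to split the argument into two parts: first establish $A = B$, then show $v = w$ by a coinductive bisimilarity argument driven by the uniqueness of messages on $c$ in $\mc{T}$.

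For the type equality, I would note that every rule defining $\jtoc{T}{\cdot}{c}{\cdot}$ forces $\jttp{T}{c}{A}$. Rules O-$\bot$, O-$\Tplus$, and O-$\Tamp$ include $\jttp{T}{c}{A}$ as an explicit premise; the remaining rules O-$\Tu$, O-$\Tot$, and O-$\rho$ each rely on a judgment $\jmsg{c}{m} \in \mc{T}$ whose presence, by \cref{prop:main:3}, forces some $\jctype{c}{A'} \in \mc{T}$ with $A'$ of the corresponding shape ($\Tu$, $A_1 \Tot A_2$, or $\Trec{\alpha}{A''}$). Hence both $\jttp{T}{c}{A}$ and $\jttp{T}{c}{B}$ hold, and \cref{prop:main:2} yields $A = B$.

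For $v = w$, I would proceed coinductively by considering the relation
\[
  R = \{(v, w) \mid \exists c, A.\ \jtoc{T}{v}{c}{A} \text{ and } \jtoc{T}{w}{c}{A}\}.
\]
The key observation is that, for any fixed channel $c$ occurring in $T$, the rule used to derive an observation on $c$ is uniquely determined by the state of $c$ in $\mc{T}$: if no $\jmsg{c}{m}$ appears in $\mc{T}$, only O-$\bot$ applies, so $v = w = \bot_A$; otherwise, by \cref{cor:main:1} there is a unique such $m$, and its syntactic form selects exactly one of O-$\Tu$, O-$\Tplus$, O-$\Tamp$, O-$\Tot$, O-$\rho$. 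Consequently both derivations use the same rule, so $v$ and $w$ share the same outermost constructor, and their immediate sub-communications are observed on the same continuation (and, in the O-$\Tot$ case, transmitted) channels. By the type-equality step applied to those sub-channels, the resulting pairs of sub-observations lie in $R$. Therefore $R$ is closed under the observation rules in the sense that matched pairs have matched heads and $R$-related children.

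The step I expect to be the main obstacle is converting this closure property into genuine syntactic equality. This is the standard consequence of treating communications as the final coalgebra of the signature functor induced by the grammar for $v$: bisimilarity for such coinductively generated trees coincides with equality, exactly as for streams. Granted this principle, $R \subseteq \{(v,v) \mid \jsynt{v}{A}\}$, and we conclude $v = w$ as required. The remainder of the proof is the routine case analysis on $\jmsg{c}{m} \in \mc{T}$ sketched above, with each case appealing to \cref{cor:main:1} to identify the continuation channels in the two derivations.
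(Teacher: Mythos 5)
Your proposal is correct and takes essentially the same approach as the paper: both arguments hinge on \cref{cor:main:1} to show that the two derivations of an observation on a fixed channel $c$ must be formed by the same rule, define a bisimulation relation on observed communications, and conclude by the principle that bisimilar elements of the final coalgebra are equal. The only cosmetic difference is that you establish $A = B$ up front via \cref{prop:main:3} and \cref{prop:main:2}, whereas the paper folds the type equality into the single bisimulation argument.
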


\begin{proof}[Proof (Sketch).]
  Let $R = \{ (\jtoc{T}{v}{c}{A}, \jtoc{T}{w}{c}{B}) \mid \exists v, w, c, A, B . \jtoc{T}{v}{c}{A} \land \jtoc{T}{w}{c}{B} \}$.
  We claim that $R$ is a bisimulation.
  Indeed, let $(\jtoc{T}{v}{c}{A}, \jtoc{T}{w}{c}{B}) \in R$ be arbitrary.
  By \cref{cor:main:1}, at most one rule is applicable to form a judgment of the form $\jtoc{T}{u}{c}{C}$ (with $c$ fixed), so $\jtoc{T}{v}{c}{A}$ and $\jtoc{T}{w}{c}{B}$ were both formed by the same rule.
  A case analysis shows on this rule shows that $R$ satisfies the definition of a bisimulation.

  Consider arbitrary $\jtoc{T}{v}{c}{A}$ and $\jtoc{T}{w}{c}{B}$.
  They are related by $R$, so they are bisimilar.
  By \cite[Theorem~2.7.2]{jacobs_rutten_2012:_introd_coalg_coind}, bisimilar elements of the terminal coalgebra are equal, so $v = w$ and $A = B$.
\end{proof}

\Cref{prop:main:10} gives the converse of \cref{prop:main:11}:

\begin{corollary}
  \label[corollary]{prop:main:10}
  If $T$ is a process trace, then for all $c$, if $\jtoc{T}{v}{c}{A}$, then $\jttp{T}{c}{A}$.
\end{corollary}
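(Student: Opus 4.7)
My plan is to prove the corollary by case analysis on the final rule used to derive $\jtoc{T}{v}{c}{A}$. In three of the six rules, namely \rn{O-$\bot$}, \rn{O-$\Tplus$}, and \rn{O-$\Tamp$}, the judgment $\jttp{T}{c}{A}$ already appears as a premise, so the conclusion is immediate. The case \rn{O-$\Tu$} is also easy: from the premise $\jmsg{c}{\ast} \in \mc{T}$ we know $\jmsg{c}{\ast} \in M_n$ for some $n$, so by \cref{prop:main:3} we get $\jctype{c}{\Tu} \in M_n \subseteq \mc{T}$, which gives $\jttp{T}{c}{\Tu}$ as required.

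The nontrivial cases are \rn{O-$\Tot$} and \rn{O-$\rho$}. I would not argue coinductively on the subderivations; rather, I would exploit totality (\cref{prop:main:11}) and uniqueness (\cref{prop:obssem:4}) of observed types to force agreement between the observed type and the one dictated by the trace. Consider for instance \rn{O-$\Tot$}, so $v = (u, w)$, $A = A_0 \Tot B_0$, and the premises yield $\jmsg{c}{\mSendC{c}{a}{d}} \in \mc{T}$, $\jtoc{T}{u}{a}{A_0}$, and $\jtoc{T}{w}{d}{B_0}$. By \cref{prop:main:3}, the message judgment pins down $\jctype{c}{A' \Tot B'}$, $\jctype{a}{A'}$, and $\jctype{d}{B'}$ for some $A'$, $B'$; hence $\jttp{T}{a}{A'}$ and $\jttp{T}{d}{B'}$. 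By \cref{prop:main:11}, there exist $u'$ and $w'$ with $\jtoc{T}{u'}{a}{A'}$ and $\jtoc{T}{w'}{d}{B'}$. Applying \cref{prop:obssem:4} twice yields $A_0 = A'$ and $B_0 = B'$, so $A = A' \Tot B'$ and therefore $\jttp{T}{c}{A}$. The \rn{O-$\rho$} case is entirely analogous, using the $\Trec{\alpha}{A}$ clause of \cref{prop:main:3}.

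I expect the main subtlety to lie in making the \rn{O-$\Tot$} and \rn{O-$\rho$} cases precise: one must resist the temptation to try induction or coinduction on the subderivations, and instead observe that \cref{prop:main:3} supplies the correct typing of both $c$ and its continuation channels, while \cref{prop:main:11,prop:obssem:4} together guarantee that the observed types on those continuation channels are forced to coincide with the types recorded in the trace. Once one sees this, each case reduces to a short appeal to the three supporting propositions, and no coinductive argument is needed.
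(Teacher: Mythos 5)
Your overall strategy is sound and draws on exactly the right supporting results (\cref{prop:main:3,prop:main:11,prop:obssem:4}), but the \rn{O-$\rho$} case as you describe it has a genuine gap. Working at the continuation channel $d$, your method yields $\jttp{T}{d}{[\Trec{\alpha}{A'}/\alpha]A'}$ from \cref{prop:main:3} and then, via totality and uniqueness at $d$, the equation $[\Trec{\alpha}{A}/\alpha]A = [\Trec{\alpha}{A'}/\alpha]A'$ between the unfolding occurring in your derivation and the one recorded in the trace. But to conclude $\jttp{T}{c}{\Trec{\alpha}{A}}$ you need $\Trec{\alpha}{A} = \Trec{\alpha}{A'}$, and unfolding is not injective on recursive session types: for $T_1 = \Trec{\alpha}{(\alpha \Tot \Tu)}$ and $T_2 = \Trec{\beta}{(T_1 \Tot \Tu)}$ one has $[T_1/\alpha](\alpha \Tot \Tu) = T_1 \Tot \Tu = [T_2/\beta](T_1 \Tot \Tu)$ even though $T_1 \neq T_2$. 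So ``entirely analogous'' fails at exactly the point where the \rn{O-$\Tot$} case relied on injectivity of the constructor $\Tot$; the \rn{O-$\Tot$} case itself is fine.

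The repair is to apply totality and uniqueness at $c$ itself rather than at the continuation channels, which is what the paper does, uniformly for every rule: each rule other than \rn{O-$\bot$} has a premise $\jmsg{c}{m} \in \mc{T}$, so \cref{prop:main:3} gives $\jttp{T}{c}{B}$ for \emph{some} $B$; then \cref{prop:main:11} produces a $w$ with $\jtoc{T}{w}{c}{B}$, and \cref{prop:obssem:4} applied to this judgment together with $\jtoc{T}{v}{c}{A}$ forces $A = B$ directly. This avoids any case-by-case reconstruction of $A$ from the types of its component channels and never needs to invert an unfolding. Your observations that \rn{O-$\bot$}, \rn{O-$\Tplus$}, and \rn{O-$\Tamp$} carry $\jttp{T}{c}{A}$ as a premise, and that no induction or coinduction on the derivation is required, are both correct.
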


\begin{proof}
  We show by case analysis on the rules that if $\jtoc{T}{v}{c}{A}$, then $\jttp{T}{c}{B}$ for some $B$.
  The case \rn{O-$\bot$} is obvious, while for each other case, if $\jtoc{T}{v}{c}{A}$, then $\jmsg{c}{m} \in \mc{T}$ for some $m$.
  For each of these cases, \cref{prop:main:3} implies $\jctype{c}{B} \in \mc{T}$ for some $B$, \ie, $\jttp{T}{c}{B}$.

  Assume $\jtoc{T}{v}{c}{A}$.
  By the claim, $\jttp{T}{c}{B}$ for some $B$.
  By \cref{prop:main:11}, there exists a $w$ such that $\jtoc{T}{w}{c}{B}$.
  By \cref{prop:obssem:4}, $A = B$, so $\jttp{T}{c}{A}$.
\end{proof}

\begin{theorem}
  \label[theorem]{theorem:main:2}
  Let $T$ be a fair execution of $\jtypem{\cdot}{c_1:A_1,\dotsc,c_n:A_n}{P}{c_0}{A_0}$.
  For all $0 \leq i \leq n$, there exist unique $v_i$ such that $\jsynt{v_i}{A_i}$ and $\jtoc{T}{v_i}{c_i}{A_i}$.
\end{theorem}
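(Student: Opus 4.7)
The theorem packages together the existence, typing, and uniqueness results established earlier for the observation judgment $\jtoc{T}{v}{c}{A}$, so the plan is essentially to verify that each channel $c_i$ in the interface of $P$ satisfies the hypotheses of the relevant propositions and then invoke them in sequence.

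First, I would unfold the definition of the initial configuration $M_0$ of $\jtypem{\cdot}{c_1:A_1,\dotsc,c_n:A_n}{P}{c_0}{A_0}$ to note that it contains the persistent judgments $\jctype{c_i}{A_i}$ for each $0 \leq i \leq n$. Since these judgments are persistent, they belong to every $M_k$ along $T$, so in particular $\jctype{c_i}{A_i} \in \mc{T}$ and thus $\jttp{T}{c_i}{A_i}$ for each $i$. (Fairness plays no role in this step; we merely need $T$ to be a process trace, which it is.)

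Next, I would apply \cref{prop:main:11} to each $c_i$: because $\jttp{T}{c_i}{A_i}$, there exists some $v_i$ with $\jtoc{T}{v_i}{c_i}{A_i}$. \Cref{prop:main:12} then gives $\jsynt{v_i}{A_i}$, establishing the existence half of the claim.

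For uniqueness, suppose $v_i'$ also satisfies $\jtoc{T}{v_i'}{c_i}{A_i}$. Then \cref{prop:obssem:4} applied with $A = B = A_i$ forces $v_i = v_i'$, completing the proof.

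The only ``hard'' point worth a sentence is that the theorem as stated fixes the type $A_i$ in advance, whereas \cref{prop:obssem:4} delivers uniqueness of both the communication and its type; this is a strictly stronger statement that directly implies what we need, so no additional work is required. The meat of the argument has already been done in \cref{prop:main:11,prop:main:12,prop:obssem:4,prop:main:3,prop:main:2}, and the theorem is really a convenient repackaging tailored to the interface channels of a closed well-typed process.
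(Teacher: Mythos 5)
Your proposal is correct and follows essentially the same route as the paper: observe that the initial configuration already supplies $\jctype{c_i}{A_i} \in \mc{T}$, hence $\jttp{T}{c_i}{A_i}$, then invoke \cref{prop:main:11} for existence, \cref{prop:main:12} for well-typedness, and \cref{prop:obssem:4} for uniqueness. Your added remark that fairness is not needed for this particular step (only that $T$ is a trace from the initial configuration) is accurate and a slight sharpening of the paper's phrasing.
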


\begin{proof}
  By definition of fair execution, we have $\jctype{c_i}{A_i} \in \mc{T}$ for all $0 \leq i \leq n$, \ie, $\jttp{T}{c_i}{A_i}$ for all $0 \leq i \leq n$.
  By \cref{prop:main:11}, for all $0 \leq i \leq n$, there exists a $v_i$ such that $\jtoc{T}{v_i}{c_i}{A_i}$, and $\jsynt{v_i}{A_i}$ by \cref{prop:main:12}.
  Each $v_i$ is unique by \cref{prop:obssem:4}.
\end{proof}

The following theorem captures the confluence property typically enjoyed by SILL-style languages:

\begin{theorem}
  \label[theorem]{theorem:main:1}
  Let $T$ and $T'$ be a fair executions of $\jtypem{\cdot}{c_1:A_1,\dotsc,c_n:A_n}{P}{c_0}{A_0}$.
  For all $0 \leq i \leq n$, if $\jtoc{T}{v_i}{c_i}{A_i}$ and $\jtoc{T'}{w_i}{c_i}{A_i}$, then $v_i = w_i$.
\end{theorem}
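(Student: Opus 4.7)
The plan is to reduce this confluence property to the per-trace uniqueness result (\cref{prop:obssem:4}) using union-equivalence of fair executions. By \cref{cor:main:3}, the fair executions $T$ and $T'$ are union-equivalent: there is a refreshing substitution $\eta$ acting on the fresh-constant substitutions of $T'$ such that the refreshed trace $T'' = [\eta]T'$ satisfies $\mc{T} = \mc{T''}$. The external channels $c_0, \dotsc, c_n$ appear in the initial configuration and so are fixed by any refreshing, while $\eta$ bijectively permutes the freshly-generated constants.

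The key technical step is a refreshing-invariance lemma for observations: extending $\eta$ to fix all external channels, $\jtoc{T'}{w}{d}{A}$ implies $\jtoc{T''}{w}{\eta(d)}{A}$. The essential point is that session-typed communication values $w$ are trees built from $\bot_A$, $\ast$, labels, pairs, and $\unfold$, none of which mentions channel names, so observation trees are preserved under renaming of internal channels. I would establish this by coinduction, defining
\[
  R = \{(w, \eta(d), A) : \jtoc{T'}{w}{d}{A}\}
\]
and checking by case analysis on the rule forming $\jtoc{T'}{w}{d}{A}$ that $R$ is closed under the rule functional defining the observation judgment for $T''$. Each case uses that $\jmsg{d}{m} \in \mc{T'}$ iff $\jmsg{\eta(d)}{\eta(m)} \in \mc{T''}$, and similarly for $\jctype{-}{-}$; for instance, the \rn{O-$\Tot$} case takes $\jmsg{d}{\mSendC{d}{a}{e}} \in \mc{T'}$ to $\jmsg{\eta(d)}{\mSendC{\eta(d)}{\eta(a)}{\eta(e)}} \in \mc{T''}$, and the coinductive hypothesis supplies the observations on $\eta(a)$ and $\eta(e)$.

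With this lemma in hand, the theorem is immediate. Specializing $d = c_i$ (so $\eta(c_i) = c_i$), we obtain $\jtoc{T''}{w_i}{c_i}{A_i}$ from $\jtoc{T'}{w_i}{c_i}{A_i}$. Because $\mc{T} = \mc{T''}$ and the observation judgment depends on its trace argument only through this set of judgments, we also have $\jtoc{T}{w_i}{c_i}{A_i}$. Combined with the hypothesis $\jtoc{T}{v_i}{c_i}{A_i}$, \cref{prop:obssem:4} yields $v_i = w_i$. I expect the main obstacle to be formulating the refreshing-invariance lemma cleanly: in particular, extending $\eta$ to act coherently on continuation-channel names embedded within message contents such as $\mSendC{c}{a}{d}$ and $\mSendU{c}{d}$, and verifying that the correspondences between $\jmsg{-}{-}$ and $\jctype{-}{-}$ judgments in $\mc{T'}$ and $\mc{T''}$ are strictly bijective rather than merely inclusive in one direction.
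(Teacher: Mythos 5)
Your proposal is correct and takes essentially the same route as the paper's proof: invoke \cref{cor:main:3} to get union-equivalence of the two fair executions, observe that the judgment $\jtoc{T}{v}{c}{A}$ depends on its trace argument only through the set $\mc{T}$ of judgments appearing in it, and conclude with \cref{prop:obssem:4}. The only difference is one of detail: the paper asserts $\mc{T} = \mc{T}'$ and the transfer of $\jtoc{T'}{w_i}{c_i}{A_i}$ to $T$ as immediate (identifying traces up to refreshing throughout), whereas you make the refreshing substitution explicit and justify the transfer by a coinductive invariance lemma, which is a legitimate filling-in of the same argument rather than a different approach.
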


\begin{proof}
  Assume $\jtoc{T}{v_i}{c_i}{A_i}$ and $\jtoc{T'}{w_i}{c_i}{A_i}$.
  By \cref{cor:main:3}, traces $T$ and $T'$ are union-equivalent, \ie, $\mc{T} = \mc{T}'$.
  It immediately follows that $\jtoc{T'}{w_i}{c_i}{A_i}$ if and only if $\jtoc{T}{w_i}{c_i}{A_i}$.
  So $v_i = w_i$ by \cref{prop:obssem:4}.
\end{proof}

We use \cref{theorem:main:1,theorem:main:2} to define the \defin{operational observation} $\obsbr{\jtypem{\cdot}{c_1:A_1,\dotsc,c_n:A_n}{P}{c_0}{A_0}}$ of $\jtypem{\cdot}{c_1:A_1,\dotsc,c_n:A_n}{P}{c_0}{A_0}$.
It is the tuple of observed communications
\[
  \obsbr{\jtypem{\cdot}{c_1:A_1,\dotsc,c_n:A_n}{P}{c_0}{A_0}} = (c_0 : v_0, \dotsc, c_n : v_n)
\]
where $\jtoc{T}{v_i}{c_i}{A_i}$ for $0 \leq i \leq n$ for some fair execution $T$ of $\jtypem{\cdot}{c_1:A_1,\dotsc,c_n:A_n}{P}{c_0}{A_0}$.
Such a $T$ exists by \cref{cor:main:3}, and $\obsbr{\jtypem{\cdot}{c_1:A_1,\dotsc,c_n:A_n}{P}{c_0}{A_0}}$ does not depend on the choice of $T$ by \cref{theorem:main:1}.
The $v_i$ such that $\jtoc{T}{v_i}{c_i}{A_i}$ exist by \cref{prop:main:2}, and they are unique by \cref{prop:obssem:4}.

Uniqueness of operational observations and \cref{theorem:main:1} crucially depend on fairness.
Indeed, without fairness a process can have infinitely many observations.
To see this, let $\Omega$ and $B$ respectively be given by
\begin{gather*}
  \jtypem{\cdot}{\cdot}{\tFix{\omega}{\omega}}{a}{\Tu}\\
  \jtypem{\cdot}{a : \Tu}{\tFix{p}{\tSendU{b}{\tSendL{b}{l}{p}}}}{b}{\Trec{\beta}{\Tplus\{l:\beta\}}}
\end{gather*}
Rule~\eqref{eq:76} is the first step of any execution of their composition $\jtypem{\cdot}{\cdot}{\tCut{a : \Tu}{\Omega}{B}}{b}{\Trec{\beta}{\Tplus\{l:\beta\}}}$.
It spawns $\Omega$ and $B$ as separate processes.
Without fairness, an execution could then consist exclusively of applications of rule \eqref{eq:main:2} to $\Omega$.
This would give the observation $(b : \bot)$.
Alternatively, $B$ could take finitely many steps, leading to observations where $b$ is a tree of correspondingly finite height.
Fairness ensures that $B$ and $\Omega$ both take infinitely many steps, leading to the unique observation $(b : (\ms{unfold}, (l, (\ms{unfold}, \dotsc))))$.

Operational observation does not take into account the order in which a process sends on channels.
For example, the following processes have the same operational observation $(a : (l, \bot_\Tu), b : (r, \bot_\Tu))$, even though they send on $a$ and on $b$ in different orders:
\begin{gather*}
  \jtypem{\cdot}{a : \Tamp \{ l : \Tu \}}{\tSendL{a}{l}{\tSendL{b}{r}{\tFwdP{a}{b}}}}{b}{\Tplus \{r : \Tu\}}\\
  \jtypem{\cdot}{a : \Tamp \{ l : \Tu \}}{\tSendL{b}{r}{\tSendL{a}{l}{\tFwdP{a}{b}}}}{b}{\Tplus \{r : \Tu\}}.
\end{gather*}
The order in which channels are used does not matter for several reasons.
First, messages are only ordered on a per-channel basis, and messages sent on different channels can arrive out of order.
Second, each channel has a unique pair of endpoints, and the \rn{Cut} rule organizes processes in a tree-like structure.
This means that two processes communicating with a process $R$ cannot at the same time also directly communicate with each other to compare the order in which $R$ sent them messages.
In other words, the ordering cannot be distinguished by other processes.

Our notion of operational observation scales to support language extensions.
Indeed, for each new session type one first defines its corresponding session-typed communications.
Then, one specifies how to observe message judgments $\jmsg{c}{m}$ in a trace as communications.
Informally, it seems desirable to ensure that if two message judgments $\jmsg{c}{m}$ can be distinguished by a receiving process, then they are observed as different session-typed communications.

A \defin{typed context} $\jtypem{\cdot}{\Delta}{C{[{\cdot}]}^{\Delta'}_{a:A}}{b}{B}$ is a context derived using the process typing rules of \cref{sec:statics-dynamics}, plus exactly one instance of the axiom
\[
  \infer[\rn{Hole}]{%
    \jtypem{\cdot}{\Delta'}{[\cdot]^{\Delta'}_{a:A}}{a}{A}
  }{}
\]
Given a context $\jtypem{\cdot}{\Delta}{C[\cdot]^{\Delta'}_{a:A}}{b}{B}$ and a process $\jtypem{\cdot}{\Delta'}{P}{a}{A}$, we let $\jtypem{\cdot}{\Delta}{C[P]}{b}{B}$ be the result of ``plugging'' $P$ into the hole, that is, of replacing the axiom \rn{Hole} by the derivation $\jtype{\Delta'}{P}{a}{A}$ in the derivation $\jtype{\Delta}{C[\cdot]^{\Delta'}_{a:A}}{b}{B}$.

We say that processes $\jtypem{\cdot}{\Delta}{P}{c}{C}$ and $\jtypem{\cdot}{\Delta}{Q}{c}{C}$ are \defin{observationally congruent}, $P \approx Q$, if $\obsbr{\jtypem{\cdot}{\Delta'}{C[P]}{b}{B}} = \obsbr{\jtypem{\cdot}{\Delta'}{C[Q]}{b}{B}}$ for all typed contexts $\jtypem{\cdot}{\Delta'}{C{[{\cdot}]}^{\Delta}_{c:C}}{b}{B}$.
Intuitively, this means that no context $C$ can differentiate processes $P$ and $Q$.

To illustrate observational congruence, we show that process composition is associative:

\begin{proposition}
  \label{prop:observ-comm:2}
  We have $\tCut{c_1 : C_1}{P_1}{(\tCut{c_2 : C_2}{P_2}{P_3})} \approx {\tCut{c_2 : C_2}{(\tCut{c_1 : C_1}{P_1}{P_2})}{P_3}}$ for all ${\jtypem{\cdot}{\Delta_1}{P_1}{c_1}{C_1}}$, all ${\jtypem{\cdot}{c_1:C_1,\Delta_2}{P_2}{c_2}{C_2}}$, and all ${\jtypem{\cdot}{c_2:C_2,\Delta_3}{P_3}{c_3}{C_3}}$.
\end{proposition}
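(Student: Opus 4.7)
The plan is to exhibit a common multiset reduct of the two processes when placed in any typed context, and then invoke confluence of observations (\cref{theorem:main:1}). Write $P = \tCut{c_1:C_1}{P_1}{(\tCut{c_2:C_2}{P_2}{P_3})}$ and $Q = \tCut{c_2:C_2}{(\tCut{c_1:C_1}{P_1}{P_2})}{P_3}$. We first verify that two applications of rule~\eqref{eq:76} to $\jproc{e}{[e/c_3]P}$ with successive fresh names $b_1, b_2$ produce the multiset
\[
  \jproc{b_1}{[b_1/c_1]P_1},\; \jproc{b_2}{[b_2/c_2,b_1/c_1]P_2},\; \jproc{e}{[e/c_3,b_2/c_2]P_3},\; \jctype{b_1}{C_1},\; \jctype{b_2}{C_2},
\]
and that two applications to $\jproc{e}{[e/c_3]Q}$, with the outer cut firing first and using fresh names $b_2', b_1'$, give the same multiset up to the refreshing substitution swapping $b_i$ with $b_i'$. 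Commutativity of the substitutions $[b_1/c_1]$ and $[b_2/c_2]$ on the middle process holds because $b_1, b_2$ are fresh and distinct from $c_1, c_2$.

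Next, fix an arbitrary typed context $\jtypem{\cdot}{\Delta'}{C[\cdot]^{\Delta}_{c_3:C_3}}{b}{B}$. By \cref{cor:main:3}, both $C[P]$ and $C[Q]$ admit fair executions, and their fair executions are pairwise permutations of each other. In any fair execution of $C[P]$, the constructors of $C$ along the spine leading to the hole must eventually reduce, so after finitely many steps the configuration contains a judgment of the form $\jproc{e}{[\sigma]P}$ for some channel name $e$ and renaming $\sigma$; symmetrically for $C[Q]$. Invoking permutation-invariance of fairness (\cref{prop:main:6,prop:main:1}) together with interference-freedom of $\mc{P}$ (\cref{prop:main:8,prop:main:13}), we reschedule each execution so that the two instantiations of \eqref{eq:76} applied to the spawned copy of $P$ (respectively $Q$) fire consecutively immediately after it appears. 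Refreshing the fresh-constant substitutions then aligns the names chosen for the two cut reductions across the two executions, so both pass through the same configuration.

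From this shared configuration, any common fair extension produces fair executions $T_P$ of $C[P]$ and $T_Q$ of $C[Q]$ with $\mc{T}_P = \mc{T}_Q$ up to refreshing. Since the coinductive judgment $\jtoc{T}{v}{c}{A}$ depends only on $\mc{T}$ via the side conditions of its defining rules, the observed communications on every external channel of $C[\cdot]$ agree, giving $\obsbr{C[P]} = \obsbr{C[Q]}$ and hence $P \approx Q$.

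The main obstacle is the rescheduling step: bringing the two Cut reductions of $P$ (respectively $Q$) into contiguous position at a chosen moment in a fair execution without disturbing fairness, and aligning the fresh names used across both executions. Both hinge on the machinery of \cref{sec:fair-mult-rewr}; in particular, \cref{prop:fair-mult-rewr:6,prop:main:6} guarantee that finite reorderings preserve fairness, while interference-freedom of $\mc{P}$ ensures that the two Cut steps commute freely with any intervening unrelated steps.
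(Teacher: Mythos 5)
Your proposal follows essentially the same route as the paper's own proof sketch: decompose both nestings of \rn{Cut} via two applications of rule~\eqref{eq:76} into a common three-process configuration, use permutation-invariance of fairness (\cref{prop:fair-mult-rewr:6,prop:main:6}) to schedule those two steps consecutively, align fresh names by refreshing, and conclude that the fair executions of $C[P]$ and $C[Q]$ agree on all message judgments and hence on observations. The one quibble is your claim that the context's spine \emph{must} eventually reduce so that a copy of $P$ appears as a standalone $\jproc{e}{[\sigma]P}$ judgment --- this can fail (e.g.\ if the hole sits under a receive on a free channel that is never supplied a message) --- but the argument survives because in that case the two executions coincide trivially, which is why the paper phrases this step conditionally (``whenever $\jproc{c_3}{L}$ appears'').
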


\begin{proof}[Proof (Sketch).]
  Let $L = \tCut{c_1 : C_1}{P_1}{(\tCut{c_2 : C_2}{P_2}{P_3})}$ and $R = \tCut{c_2 : C_2}{(\tCut{c_1 : C_1}{P_1}{P_2})}{P_3}$.
  Consider some arbitrary observation context $C[\cdot]$ and a fair execution $T$ of $C[L]$.
  It is sufficient to show that $T$ agrees on message judgments with a fair execution $C[R]$.
  Union-equivalence of process traces is invariant under permutation, so we can assume without loss of generality that whenever $\jproc{c_3}{L}$ appears in some $M_n$ of $T$, then the next two steps are applications \eqref{eq:76} to decompose $L$:
  \begin{gather*}
    \jproc{c_3}{L}
    \longrightarrow \jproc{c_1'}{[c_1'/c_1]P_1},\jproc{c_3}{[c_1'/c_1](\tCut{c_2 : C_2}{P_2}{P_3})}
    \longrightarrow
    \qquad\qquad\qquad
    \\
    \jproc{c_1'}{[c_1'/c_1]P_1},\jproc{c_2'}{[c_1',c_2'/c_1,c_2]P_2},\jproc{c_3'}{[c_2'/c_2]P_3}
  \end{gather*}
  (For conciseness, we elide the $\jctype{c}{A}$ judgments.)
  There exists a fair execution $T'$ of $C[R]$ that agrees with $T$ on all steps, except for those involving $R$, where we make the same assumption:
  \begin{gather*}
    \jproc{c_3}{R}
    \longrightarrow \jproc{c_2'}{[c_2'/c_2](\tCut{c_1 : C_1}{P_1}{P_2})},\jproc{c_3}{[c_2'/c_2]P_3}
    \longrightarrow
    \qquad\qquad\qquad
    \\
    \jproc{c_1'}{[c_1'/c_1]P_1},\jproc{c_2'}{[c_1',c_2'/c_1,c_2]P_2},\jproc{c_3'}{[c_2'/c_2]P_3}
  \end{gather*}
  So traces $T$ and $T'$ agree on all message judgments, whence $\obsbr{C[L]} = \obsbr{C[R]}$.
\end{proof}

\section{Related Work}
\label{sec:related-work}

Multiset rewriting systems with existential quantification were first introduced by \textcite{cervesato_1999:_meta_notat_protoc_analy}.
They were used to study security protocols and were identified as the first-order Horn fragment of linear logic.
Since, MRSs have modelled other security protocols, and strand spaces~\cite{cervesato_2000:_inter_stran_linear_logic,cervesato_2005:_compar_between_stran}.
\Textcite{cervesato_scedrov_2009:_relat_state_based} studied the relationship between MRSs and linear logic.
These works do not explore~fairness.

Weak and strong fairness were first introduced by \textcite{apt_olderog_1982:_proof_rules_dealin_with_fairn, park_1982:_predic_trans_weak_fair_iterat} in the context of do-od languages, and were subsequently adapted to process calculi, \eg, by \textcite{costa_stirling_1987:_weak_stron_fairn_ccs} for Milner's CCS.
Our novel notion of fairness for multiset rewriting systems in \cref{sec:fair-mult-rewr} implies strong process fairness (so also weak process fairness) for the session-typed processes of \cref{sec:sess-typed-proc}.
We conjecture that this notion of fairness is stronger than required for many applications.
In future work, we intend to explore other formulations of fairness for MRSs and their impact on applications.

Substructural operational semantics~\cite{simmons_2012:_subst_logic_specif} based on multiset rewriting are widely used to specify the operational behaviour of session-typed languages arising from proofs-as-processes interpretations of linear logic and adjoint logic.
Examples include functional languages with session-typed concurrency~\cite{toninho_2013:_higher_order_proces_funct_session}, languages with run-time monitoring~\cite{gommerstadt_2018:_session_typed_concur_contr}, message-passing interpretations of adjoint logic~\cite{pruiksma_pfenning_2019:_messag_passin_inter_adjoin_logic}, and session-typed languages with sharing~\cite{balzer_pfenning_2017:_manif_sharin_with_session_types}.
The fragment of \cref{sec:statics-dynamics} illustrates some of the key ideas of this approach, and extends to these richer settings.

Some of these languages are already equipped with observational equivalences.
For example, \textcite{perez_2014:_linear_logic_relat} introduced \textit{typed context bisimilarity}, a labelled bisimilarity for session-typed processes.
It does not support recursive processes or recursive session types.
\Textcite{toninho_2015:_logic_found_session_concur_comput} explored barbed congruence for session-typed processes and shows that it coincides with logical equivalence.
\Textcite{kokke_2019:_better_late_than_never} showed that the usual notions of bisimilarity and barbed congruence carry over from the $\pi$-calculus.
They also gave a denotational semantics using Brzozowski derivatives to ``hypersequent classical processes'' that built on Atkey's denotational semantics for CP, and showed that all three notions of equivalence agreed on well-typed programs.
In future work, we intend to show that our observational congruence agrees with barbed congruence.
\Textcite{gommerstadt_2018:_session_typed_concur_contr} define a bisimulation-style observational equivalence on multisets in process traces.
It deems two configurations equivalent if whenever both configurations send an externally visible message, then the messages are equivalent.
It is easy to adapt this bisimulation to also require that one configuration sends an externally visible message if and only if the other does.
We conjecture that this modified observational equivalence coincides with the one defined in \cref{sec:observ-comm}.

Session-typed languages enjoy other notions of process equivalence.
Several session-typed languages are equipped with denotational semantics, and denotational semantics induce a compositional notion of program equivalence.
For example, \textcite{castellan_yoshida_2019:_two_sides_same_coin} gave a game semantics to a session-typed $\pi$-calculus with recursion, where session types denote event structures that encode games, and processes denote maps that encode strategies.
\Textcite{kavanagh_2020:_domain_seman_higher_v3} gave a domain-theoretic semantics to a full-featured functional language with session-typed message passing concurrency, where session types denote domains of communications and processes are continuous functions between these.

Atkey's observed communication semantics~\cite{atkey_2017:_obser_commun_seman_class_proces} for Wadler's CP~\cite{wadler_2014:_propos_as_session} was motivated by two problems.
Because CP uses a synchronous communication semantics, processes need partners to communicate with and get stuck if they try to communicate on a free channel.
On the one hand, if processes have partners, then their communication are hidden by the \rn{Cut} rule and cannot be observed, while on the other hand, if we leave the channels free, then we need to introduce reduction rules (``commuting conversions'') for stuck processes, and these rules do not correspond to operationally justified communication steps.
Atkey's elegant solution to this tension was to give communication partners to processes with free channels via closing ``configurations'', and then observing communications on these channels.
Our task in \cref{sec:observ-comm} is made easier by the fact that we use an asynchronous communication semantics.
In our setting, a process can send messages on free channels, and we can observe these without having to provide it with communication partners via configurations.
Atkey's observational equivalence and ours suffer from the same weakness: to reason about observational equivalence, we must quantify over all observation contexts.
Atkey addresses this by relating his semantics to a denotational semantics for CP and showing that they induce the same notion of equivalence.
We are actively working on relating our OCS to Kavanagh's domain semantics~\cite{kavanagh_2020:_domain_seman_higher_v3}.
Indeed, our OCS is largely motivated by efforts to relate denotational semantics of session-typed languages to their existing substructural operational semantics.
We believe that our results on fair executions and their permutations should also simplify reasoning about observational equivalence.

\section{Conclusion and Acknowledgements}
\label{sec:concl-ackn}

We studied fair executions of multiset rewriting systems, and gave various conditions for an MRS to have fair executions.
We used these results to define an observed communication semantics for session-typed languages that are defined by substructural operational semantics: the observation of a process is its communications on its free channels.
Processes are then observationally equivalent if they cannot be distinguished through communication.
We believe this work lays the foundation for future work on the semantics of session-typed processes, and in particular, we hope that it will be useful for exploring other notions of process equivalence.

The author thanks Stephen Brookes, Iliano Cervesato, Frank Pfenning, and the anonymous reviewers for their helpful comments.

\printbibliography

\end{document}